\newtheorem{theorem}{Theorem}
\newtheorem{proposition}{Proposition}
\newtheorem{lemma}{Lemma}
\newtheorem{corollary}{Corollary}
\begin{document}

\title{Simple determination of dark states in a general multi-level system}

\author{Kaixuan Zhou}
\affiliation{New York University Shanghai, 1555 Century Ave, Pudong, Shanghai 200122, China}

\author{June Wu}
\affiliation{University of Chicago, Chicago, IL, 60637, USA} 

\author{Junheng Shi}
\affiliation{New York University Shanghai, 1555 Century Ave, Pudong, Shanghai 200122, China} 

\author{Tim Byrnes}
\email{tim.byrnes@nyu.edu}
\affiliation{New York University Shanghai, 1555 Century Ave, Pudong, Shanghai 200122, China}
\affiliation{State Key Laboratory of Precision Spectroscopy, School of Physical and Material Sciences, East China Normal University, Shanghai, 200062, China}
\affiliation{NYU-ECNU Institute of Physics at NYU Shanghai, 3663 Zhongshan Road North, Shanghai, 200062, China}
\affiliation{Center for Quantum and Topological Systems (CQTS), NYUAD Research Institute, New York University Abu Dhabi, UAE}
\affiliation{Department of Physics, New York University, New York, NY 10003, USA}

\date{\today}

\begin{abstract}
In a multi-level energy system with energy transitions, dark states are eigenstates of a Hamiltonian that consist entirely of ground states, with zero amplitude in the excited states.  We present several criteria which allows one to deduce the presence of dark states in a general multi-level system based on the submatrices of the Hamiltonian. The dark states can be shown to be the right-singular vectors of the submatrix that connect the ground states to the excited states.  Furthermore, we show a simple way of finding the dark state involving the determinant of a matrix constructed from the same submatrix.
\end{abstract}

\maketitle

\section{Introduction}
Dark states of atomic systems play an important role in several contexts, ranging from electromagnetically induced transparancy \cite{PhysRevLett.84.5094}, stimulated Raman adiabatic passage (STIRAP) \cite{UNANYAN1998144}, lasing without inversion, slow light \cite{bajcsy2009efficient}, coherent population trapping techniques \cite{donarini2019coherent}, and subrecoil cooling \cite{doi:10.1080/09500340.2019.1646827}.  One of the important effects that takes place in these phenomena is that transitions between atomic levels produce an interference effect such that the amplitude of one of the levels is zero.  The archetypal system that exhibits this effect is the three-level Lambda scheme, where two out of the three transitions between the levels are possible. Such a configuration often arises due to the presence of selection rules, which prevent certain transitions occurring \cite{selection}. A typical situation that occurs is when the intermediate state is an excited state and the other two states are ground states of the atom.  In this case the dark state only involves a superposition of the ground states, and the excited state has zero amplitude \cite{Radmore_1982}.  Such dark states play an important role in STIRAP, where they can be used for achieving an efficient means of population transfer \cite{STIRAP.superconduct,S_rensen_2006}. The use of dark states avoids decoherence via spontaneous emission of intermediate states during the process, due to dark states having precisely zero transition amplitudes.  STIRAP is also resistant to small variations in laser intensity, pulse shape and timing \cite{RevModPhys.89.015006,RevModPhys.70.1003,Bergmann_2019} allowing for robust control of the state transfer.  Numerous experiments have demonstrated the feasibility of STIRAP  \cite{PhysRevLett.99.113003,STIRAP.superconduct,S_rensen_2006,PhysRevLett.122.253201,PhysRevLett.114.205302,PhysRevLett.116.205303}.



While the three-level Lambda scheme is the most common energy level configuration, several extensions to the basic three-level $\Lambda$ system have been investigated \cite{coulston1992population,doi:10.1063/1.458514}. Most extensions consider a chain model constructed by linking together multiple three-level $\Lambda$ structures \cite{PhysRevA.58.2295,PhysRevA.102.023515}. In these 
multi-level chain models, it is possible to produce dark states for odd numbers of levels \cite{vitanov2001laser}, but dark states are not always possible for an even number states 
\cite{doi:10.1063/1.461642,PhysRevA.44.R4118,Smith:92}.
Applications of such multi-level dark states include efficient population transfer \cite{Shore:17}, STIRAP-based quantum gates \cite{PhysRevA.65.032318},  entanglement generation \cite{PhysRevA.66.032109,doi:10.1063/1.458514,PhysRevA.98.043616,jing2019split}, and quantum search algorithms \cite{PhysRevLett.99.170503,PhysRevA.78.022330}.  In the general case, one must perform a direct diagonalization of the full Hamiltonian, and obtain the wavefunction of the dark state in this way.


In this paper, we provide a set of results that allows for the simple determination of the existence and wavefunction of a dark state in a multi-level energy network.  The situation we consider is shown in Fig. \ref{Fig1}. We start from a general multi-level energy network where there are transitions between energy levels, potentially involving detuning of the transition energies (Fig. \ref{Fig1}(a)). First a transformation is performed to remove the time dependence of the transitions, such that the energy levels are related to each other by the detunings (Fig. \ref{Fig1}(b)).  We typically assume that the ground states, defined by convention to have zero energy, to have a certain degeneracy, while the excited states have non-zero detunings.  The ground states can be connected to the excited states in an arbitrary fashion, however the ground states do not have any connection to each other.  We then define a dark state as an eigenstate of the Hamiltonian that has exactly zero amplitude with respect to the excited states. We present some general theorems for the existence of a dark state in such a system and show that several simplifications can be made that avoids diagonalization of the full Hamiltonian. We also obtain an explicit expression for the dark state which again does not require diagonalization of the full Hamiltonian.  

We note that there are several related works that have examined similar problems. Moris and Shore (MS) introduced the MS-transformation approach to analyze two-level excitations \cite{PhysRevA.27.906}, and later extended this to more complex systems \cite{doi:10.1080/09500340.2013.837205}. In the most basic form, the MS-transformation reduces a degenerate two energy level system to a set of independent non-degenerate one and two level systems.  
More recently, Finkelstein-Shapiro, Keller and co-workers introduced the general conditions for the presence of dark state for a general multi-level system where a subset of the states are considered to be the ground states \cite{PhysRevA.99.053829}.  We confirm some of the results found in these works using an alternative approach, and extend them in several ways. One of the extensions that we obtain is the role of excited states that are not directly connected with the ground state.  We show that such states never affect the dark state and can be ignored.  We furthermore obtain an explicit expression for the dark state in terms of the determinant of a matrix that can be constructed from a submatrix of the Hamiltonian.  This allows for a simple analytic determination of the dark states even in a complex network, which can be difficult to obtain due to the computational overheads associate with symbolic computation.   

This paper is organized as follows. In Sec. \ref{sec:2}, we introduce key definitions and describe the problem concretely.  In Sec. \ref{sec:3}, we provide a list of theorems related to dark states in a multi-level energy network.
In Sec. \ref{sec:4}, we apply our theorems to several examples of multi-level networks, and in Sec. \ref{sec:6} we summarize our results.

\begin{figure}[t]
    \includegraphics[width=\columnwidth]{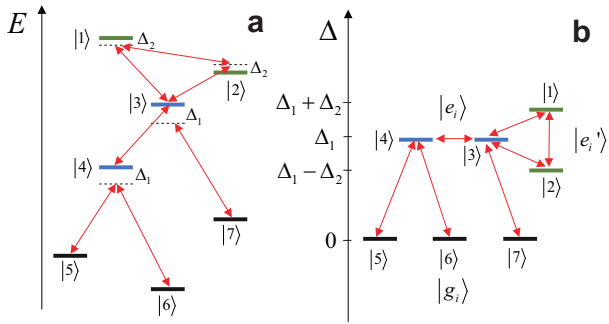}
    \caption{An example multi-level energy network that is considered in this paper. Shown are the (a) physical energy level configuration corresponding to Hamiltonian (\ref{genham}); (b) the equivalent system after performing a unitary transformation to obtain (\ref{detuningham}). Arrows between energy level indicate non-zero transition amplitudes $ \Omega_{ij} $ with frequencies $ \omega_{ij}$ between energy levels $ |i \rangle $ and $ | j \rangle $.  Dashed lines indicate the energy to which the transitions are detuned to, with the associated detunings $ \Delta_1, \Delta_2 $ as marked.  Using the transformation (\ref{unitarytrans}), the time dependent Hamiltonian (\ref{genham}) is transformed to the time independent Hamiltonian (\ref{detuningham}) as shown in (b). }
    \label{Fig1}
\end{figure}

\section{Problem Definition}
\label{sec:2}

Consider a multi-level system consisting of $N$ orthogonal states $|n\rangle$ with $n \in [1,N]$ (see Fig. \ref{Fig1}(a)). The energy of the $n$th energy level is $ E_n $.  Controllable state transitions (e.g. by a laser) are possible between any two states except for connections between two ground states.  Then Hamiltonian of the multi-level network is written
\begin{align}
    {\cal H}= &  \sum_{n} E_n | n \rangle \langle n | \nonumber \\
    & + \hbar \sum_{nm} \left(  e^{-i \omega_{nm} t }  \Omega_{nm} |n \rangle \langle m | + e^{i \omega_{nm} t }  \Omega_{nm}^* |m \rangle \langle n | \right)
    \label{genham}
\end{align}
where the complex amplitude of transition from state $ | m \rangle $ to $ | n \rangle $ is $ \Omega_{nm} $, and $ \omega_{nm} $ is the frequency of the radiation inducing the transition.  We consider two energy levels to be connected if $ | \Omega_{nm} |>0  $. 

We categorize the states into two basic types, either as being a ground state or an excited state (see Fig. \ref{Fig1}(b)).  In principle the choice is arbitrary but in practice the states that are considered to be ground states are low energy states which are stable states in the sense that they do not experience spontaneous emission. The dark states will be defined purely in terms of the ground states.  When labelling only the ground states we use the notation $ | g_i \rangle $, where $ i \in [1,N_g]$.  The remaining $ N - N_g $ states are then excited states.  We further subdivide excited states which are directly connected to the ground states and those which are not.  The $ N_e $ directly connected excited states are labelled as $ | e_i \rangle $, where $ i \in [1,N_e]$.  The remaining disconnected excited state are labelled as $ | e_i' \rangle $, where $ i \in [1,N-N_g-N_e] $.  

Using similar steps to that shown in Ref. \cite{PhysRevA.99.053829}, we may remove the time dependence by performing a unitary transformation on the states 
\begin{align}
   | n \rangle \rightarrow e^{i \epsilon_n t } |n \rangle 
   \label{unitarytrans}
\end{align} 
such that the Hamiltonian becomes
\begin{align}
H  = & \sum_n (E_n - \epsilon_n ) | n \rangle \langle n |  \nonumber \\
& + \hbar \sum_{nm} \left(   \Omega_{nm} |n \rangle \langle m | +  \Omega_{nm}^* |m \rangle \langle n | \right)
\end{align}
where the $ \epsilon_n$ are frequency transformations that satisfy 
\begin{align}
\omega_{nm} = \epsilon_n - \epsilon_m  .
\label{frequencycriterion}
\end{align}
For a $ N $-level system, there are potentially $ N(N-1)/2 $ connections, and only $ N $ variables $ \epsilon_n$. In order for the transformation to be possible in the general case, then the number of transitions must be at most $ N $ such that (\ref{frequencycriterion}) is satisfied.  
Assuming that the frequencies $ \omega_{nm} $ are controllable, 
the frequency transformations $ \epsilon_n $ allows one to effectively shift the energies of the levels, as long as (\ref{frequencycriterion}) is satisfied.  

One of the results of Ref. \cite{PhysRevA.99.053829} is that 
the ground states must possess a degeneracy in order that a dark state is present.  For this reason we assume that the frequencies $ \epsilon_n $ on the ground states are chosen such that they are all equal
\begin{align}
    E_0 = E_n - \epsilon_n   \hspace{1cm} \text{(ground states)}.  
\end{align}
Redefining the zero level of the energy we may then write
\begin{align}
H   = & \hbar  \sum_{n} \Delta_{n} | n \rangle \langle n |  +  \hbar  \sum_{nm} \left(   \Omega_{nm} |n \rangle \langle m | +  \Omega_{nm}^* |m \rangle \langle n | \right) ,
\label{detuningham}
\end{align}
where the detuning is defined as
\begin{align}
  \hbar \Delta_{n} = E_n - \epsilon_n - E_0 .  
  \label{detunings}
\end{align}
%


By reordering the Hamiltonian matrix in descending order in energy, such that the last elements are the ground states, we may write the Hamiltonian of such a system in the form
\begin{align}
H =  
\left[
\begin{array}{cc}
A & B \\
B^{\dagger} & 0 
\end{array}
\right] .
\label{chosen_form}
\end{align}
Here, $A$ is the $(N-N_g) \times (N-N_g) $ dimensional submatrix of the Hamiltonian consisting of matrix elements involving only the excited states of the system.
$A$ is a Hermitian matrix containing diagonal elements $A_{nn} = \hbar \Delta_n $. 
According to the Hermitian property of $ A $, the matrix elements $A_{nm} = A_{mn}^* = \hbar \Omega_{nm}$. 
$B$ contains all the transitions between the ground states and the excited states. $B$ is an $(N-N_g) \times N_g$ non-Hermitian matrix with complex elements $B_{nm} = \hbar \Omega_{nm}$. Disconnected energy levels in $A$ and $ B $ have 0 matrix elements.  The dimensions of the square $0$ matrix is $N_g \times N_g$.

 
We then define dark states $ |D \rangle $ as an eigenstate of $ H $ that only involve the states in the ground state manifold \cite{PhysRevA.99.053829}.  The dark state then has the form
\begin{align}
|D \rangle  =  
\left[ \begin{array}{cc}
0 \\
|D_g \rangle
\end{array} \right]  , 
\label{darkstate}
\end{align}
%
We note that this must be true for arbitrary choices of $\Omega_{nm}$.

\section{Properties of the multi-level network}
\label{sec:3}

With the definitions above, we may introduce and prove the main theorems for the presence of dark states in a general multi-level energy network. 
\begin{lemma}
Assuming that $\text{det}(A)\neq 0$, the determinant of the Hamiltonian 
(\ref{chosen_form}) obeys
\end{lemma}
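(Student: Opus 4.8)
The plan is to prove this by the Schur complement, i.e. by performing a block Gaussian elimination on the matrix (\ref{chosen_form}). Since we are given $\det(A) \neq 0$, the inverse $A^{-1}$ exists, so the lower-left block $B^{\dagger}$ can be eliminated cleanly. Concretely, I would first verify the block factorization
\begin{align}
H =
\left[ \begin{array}{cc} I & 0 \\ B^{\dagger} A^{-1} & I \end{array} \right]
\left[ \begin{array}{cc} A & B \\ 0 & -B^{\dagger} A^{-1} B \end{array} \right]
\end{align}
by direct block multiplication, checking that the four resulting blocks reproduce $A$, $B$, $B^{\dagger}$, and $0$ respectively. The only nontrivial entry is the bottom-right block of the product, where $B^{\dagger} A^{-1} B - B^{\dagger} A^{-1} B = 0$ recovers the zero block of (\ref{chosen_form}).

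The second step is to take determinants of both sides and use multiplicativity. The first factor is block lower-triangular with identity matrices on the diagonal, so its determinant is $1$; the second factor is block upper-triangular, whose determinant is the product of the determinants of its diagonal blocks, namely $\det(A)\,\det(-B^{\dagger} A^{-1} B)$. Pulling the scalar $-1$ out of the $N_g \times N_g$ Schur complement then yields
\begin{align}
\det(H) = (-1)^{N_g} \det(A) \, \det\!\left( B^{\dagger} A^{-1} B \right) .
\end{align}

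Honestly, this is the standard Schur-complement determinant identity, so I do not anticipate a genuine obstacle; the work is bookkeeping. The one place that warrants care is the sign: the factor $(-1)^{N_g}$ arises because each of the $N_g$ dimensions of the Schur complement carries a minus sign, so dropping it requires tracking the dimension of the ground-state block. I would also note that the block-triangular determinant rule can itself be justified by cofactor expansion or by a further factorization into elementary operations, but invoking it directly is cleanest. Finally, I would record the immediate corollary that, since $\det(A) \neq 0$, one has $\det(H) = 0$ if and only if $\det(B^{\dagger} A^{-1} B) = 0$; this is the bridge to the later results linking the existence of dark states to the rank deficiency of $B$, and hence to its zero-singular-value right-singular vectors.
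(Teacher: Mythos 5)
Your block factorization is correct and is essentially the same argument as the paper's: the paper post-multiplies $H$ by a unit upper-triangular matrix $Q$ with $-A^{-1}B$ in the off-diagonal block, while you pre-factor $H$ with a unit lower-triangular matrix; both are the standard Schur-complement elimination, both give $\det(H)=\det(A)\det(-B^{\dagger}A^{-1}B)$, i.e.\ Eq.~(\ref{hrelation}), and your bookkeeping of the sign as $(-1)^{N_g}$ is right.

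The gap is that the lemma asserts a second equality, Eq.~(\ref{lemma1}), namely $\det(H)=(-1)^{N}\det(B^{\dagger}B)$, and your proof stops short of it: you end at $(-1)^{N_g}\det(A)\det\bigl(B^{\dagger}A^{-1}B\bigr)$, in which $A$ still appears. Passing to $\det(B^{\dagger}B)$ requires $\det(B^{\dagger}A^{-1}B)=\det(B^{\dagger})\det(A^{-1})\det(B)$, which is only legitimate when $B$ is square ($N=2N_g$); for rectangular $B$ the factor $\det(A)$ does not cancel, and one can check on small examples (e.g.\ $A=\mathrm{diag}(1,2)$, $B=(1,1)^{T}$) that $\det(A)\det(-B^{\dagger}A^{-1}B)$ and $(-1)^{N}\det(B^{\dagger}B)$ genuinely differ. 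To be fair, the paper's own proof of this step is a one-line appeal to the Cauchy--Binet formula that does not resolve this either; what survives in general, and what the later results actually rely on, is the weaker statement that $B$ having a nontrivial kernel forces both $\det(-B^{\dagger}A^{-1}B)$ and $\det(B^{\dagger}B)$ to vanish. So you should either restrict the second equality to the square case, or replace it by the rank/kernel statement; as written, your proposal proves only the first of the two displayed identities.
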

\begin{align}
    \text{det}(H) & =  \text{det}(A)\text{det}(-B^\dagger A^{-1} B) \label{hrelation}  \\
    & = (-1)^N \text{det}(B^\dagger B).
        \label{lemma1}
\end{align}

\begin{proof}
First define an auxiliary matrix
%
\begin{align}
Q = 
\left[ \begin{array}{cc}
I & -A^{-1}B\\
0 & I\\
\end{array} \right] ,  
\end{align}
where the submatrix dimensions are the same as $ H $.  Multiplying (\ref{chosen_form}) from the left by $ Q $ we have 
\begin{align}
H Q = \left[ \begin{array}{cc}
A & 0\\
B^\dagger & -B^\dagger A^{-1}B\\
\end{array} \right].
\end{align}
Taking the determinant of the left hand side and using the Cauchy-Binet formula, we have $ \text{det}(HQ) = \text{det}(H) $ since $ \text{det}(Q)  = 1 $. Taking the determinants of block matrices \cite{bernstein2009matrix} on the right hand side we obtain (\ref{hrelation}).  
This can be further simplified given the non-singular assumption of $ A $ and the Cauchy-Binet formula again. Since we assume $ \text{det}(A) \neq 0$, we can simplify (\ref{hrelation}) to obtain (\ref{lemma1}).
\end{proof}

We may now relate the dark states of the Hamiltonian (\ref{chosen_form}) to the properties of its submatrix $ B $.
\begin{theorem}
A dark state of the Hamiltonian (\ref{chosen_form}) exists if and only if $B$ has a right-singular vector $|D_g\rangle$ with a singular value of zero.  Furthermore, the dark state has an eigenvalue of zero for Hamiltonian (\ref{chosen_form}).  
\end{theorem}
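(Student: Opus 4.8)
The plan is to insert the dark-state ansatz (\ref{darkstate}) directly into the eigenvalue equation $H|D\rangle = \lambda|D\rangle$, using the block form (\ref{chosen_form}), and to read off the resulting conditions block by block. Carrying out the matrix--vector multiplication, the upper block of $H|D\rangle$ equals $B|D_g\rangle$, while the lower block vanishes identically because the bottom-right block of $H$ is zero and the upper component of $|D\rangle$ is zero. Equating with $\lambda|D\rangle$, whose upper block is zero and whose lower block is $\lambda|D_g\rangle$, then yields two conditions: $B|D_g\rangle = 0$ from the upper block, and $\lambda|D_g\rangle = 0$ from the lower block.

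Since $|D_g\rangle$ is by definition a nonzero ground-state vector, the second condition forces $\lambda = 0$, which immediately establishes the ``furthermore'' claim that any dark state has eigenvalue zero. The content of the theorem then reduces to characterizing the solutions of the single remaining equation $B|D_g\rangle = 0$; that is, $|D_g\rangle$ must lie in the kernel of $B$.

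The key step is to identify this kernel with the right-singular vectors of $B$ carrying singular value zero. I would invoke the singular value decomposition $B = U\Sigma V^\dagger$, with $U$ and $V$ unitary and $\Sigma$ the rectangular matrix of singular values, so that the columns of $V$ are the right-singular vectors $|v_i\rangle$ obeying $B|v_i\rangle = \sigma_i|u_i\rangle$. A right-singular vector satisfies $B|v_i\rangle = 0$ precisely when $\sigma_i = 0$, and conversely any vector annihilated by $B$ is a superposition of such zero-singular-value right-singular vectors; equivalently, one notes $\| B|D_g\rangle \|^2 = \langle D_g | B^\dagger B | D_g \rangle$, which vanishes if and only if $B|D_g\rangle = 0$, tying the condition to the eigenvectors of $B^\dagger B$ with eigenvalue zero. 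Assembling both directions then gives the stated equivalence: a dark state exists exactly when $B$ possesses a zero-singular-value right-singular vector, and the dark state is recovered by placing that vector into the ground-state block of (\ref{darkstate}), which by the computation above is an eigenstate of $H$ with eigenvalue zero.

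I expect the only real subtlety, rather than a genuine obstacle, to lie in this singular-value correspondence --- in particular, in making precise that ``$B$ has a right-singular vector with singular value zero'' is equivalent to $B$ having a nontrivial kernel, and in observing that a degenerate zero singular value corresponds to a multiplicity of independent dark states. The block multiplication and the deduction $\lambda = 0$ are both immediate.
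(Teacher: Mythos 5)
Your proof is correct, but it takes a genuinely more direct route than the paper's. The paper also begins from the singular value decomposition $B=\sum_i\lambda_i|u_i\rangle\langle v_i|$, but it establishes the two directions asymmetrically: for existence it constructs $|D\rangle = P|D_g\rangle$ and verifies $H|D\rangle=0$ (much as you do), while for non-existence it argues through Lemma~1, showing $\det(-B^\dagger A^{-1}B)\neq 0$ and hence $\det(H)\neq 0$, so that no zero-eigenvalue state --- and therefore no dark state --- can exist. That determinant route leans on the standing assumption $\det(A)\neq 0$. Your argument instead reads both conditions straight off the block eigenvalue equation: the lower block forces $\lambda=0$ and the upper block forces $B|D_g\rangle=0$, so the ``only if'' direction follows without any reference to $A$ or to Lemma~1 at all. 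This is a real simplification --- it shows the theorem as stated is independent of the invertibility of $A$, which in the paper only becomes apparent later in Proposition~1 --- and your identification of $\ker B$ with the span of the zero-singular-value right-singular vectors (via $\|B|D_g\rangle\|^2=\langle D_g|B^\dagger B|D_g\rangle$) correctly handles degenerate kernels, a point the paper glosses over by speaking of a single right-singular vector. What the paper's longer route buys is that the determinant identity $\det(H)=(-1)^N\det(B^\dagger B)$ is set up for reuse in Corollary~1 and Propositions~1 and~2; your proof would require restating that machinery there.
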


\begin{proof}
Performing a singular value decomposition on $ B $ we may write
\begin{align}
    B= \sum_{i} \lambda_i |u_i \rangle \langle v_i |
    \label{svdb}
\end{align}
where $|u_i \rangle, | v_i \rangle  $ are the left- and right-singular vectors and $ \lambda_i$ are the singular values. The singular vectors are orthonormal 
$ \langle u_j | u_i \rangle = \langle v_j | v_i \rangle = \delta_{ij}$.  

First suppose that there exists a singular value such that $ \lambda_j = 0 $, and we take $ |D_g\rangle $ to be the associated right-singular vector $ |D_g\rangle =  | v_j \rangle $. Then from (\ref{svdb}), we have  $ B|D_g\rangle = 0 $, and it immediately follows that 
\begin{align}
    -B^\dagger A^{-1}B|D_g\rangle=0 .
    \label{zeroeigenvalue}
\end{align}
It then follows that that the determinant of the matrix is
\begin{align}
    \text{det}(-B^\dagger A^{-1}B) = 0.
    \label{babyes}
\end{align}
%
Applying Lemma 1, we obtain 
\begin{align}
    \text{det}(H) = 0, 
\end{align}
which shows that a zero eigenvalue state exists. 

We may now explicitly construct the dark state from $ |D_g \rangle $ with the aid of a matrix $P$ of size $N \times N_g $ which projects $|D_g\rangle$ to the whole Hilbert space defined by $H$. 
Using the same submatrix conventions as the right hand column of (\ref{chosen_form}), we define the projector as 
\begin{align}
P = 
\left[ \begin{array}{cc}
 0\\
 I
\end{array} \right] ,   
\end{align}
where the $ 0 $ is a $ (N - N_g) \times N_g $ dimensional zero matrix, and $ I $ is a $ N_g \times N_g $ dimensional identity matrix.  We may use this to relate the full dark state vector $|D\rangle$ to the left singular vector $|D_g\rangle$ via
\begin{align}
    P|D_g\rangle = |D\rangle = \left[ \begin{array}{cc}
0 \\
|D_g \rangle
\end{array} \right] .  
\label{projdarkstate}
\end{align}
Directly applying this state to the Hamiltonian (\ref{chosen_form}) we find that 
\begin{align}
    H | D \rangle = 0 ,
\end{align}
showing that this state is an eigenstate with zero eigenvalue. 
This has the same form as (\ref{darkstate}) showing that a dark state exists.  

Now suppose that a zero singular value $ \lambda_j = 0 $ does not exist.  Applying $ B $ to the state $ |D_g \rangle $ we have
\begin{align}
    B | D_g\rangle = \sum_i \lambda_i \langle v_i | D_g \rangle | u_i \rangle  .  
\end{align}
This cannot be a zero vector as $ |D_g \rangle $ is a normalized state and $ \lambda_j \ne 0 $.  Hence there is no state such that $  B | D_g\rangle = 0 $.  Applying the Hamiltonian to the form of the dark state (\ref{projdarkstate}) for this case, we have
\begin{align}
H |D \rangle = 
\left[ \begin{array}{cc}
B | D_g \rangle  \\
0 
\end{array} \right] .  
\end{align}
For this state to be an eigenstate of the Hamiltonian, we require the eigenvalue of the state to be zero.  However, since here 
\begin{align}
\text{det} (-B^\dagger A^{-1}B) \neq 0
\label{babnone}
\end{align}
using the Cauchy-Binet formula, and $ \text{det} (A)\neq 0$, using  Lemma 1 there is no zero eigenvalue of $ H $.  Hence a dark state does not exist.   

Thus, if and only if there is a right-singular vector $|D_g\rangle$ with a singular value of zero,  a dark state exists in the multi-level network.
\end{proof}

A similar result was found in Ref. \cite{PhysRevA.99.053829} using different arguments.  

\begin{corollary}
A dark state of the Hamiltonian (\ref{chosen_form}) exists if and only if  $ \text{det} (B^\dagger B) = 0 $. 
\end{corollary}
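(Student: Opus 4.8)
The plan is to reduce the statement to Theorem 1 by relating the determinant of $B^\dagger B$ to the singular values of $B$. First I would recall that $B^\dagger B$ is an $N_g \times N_g$ Hermitian positive semidefinite matrix, and that the singular value decomposition (\ref{svdb}) of $B$ immediately diagonalizes it: using the orthonormality $\langle u_j | u_i \rangle = \delta_{ij}$, one finds $B^\dagger B = \sum_i \lambda_i^2 |v_i\rangle \langle v_i|$, so that the right-singular vectors $|v_i\rangle$ are precisely the eigenvectors of $B^\dagger B$ and the eigenvalues are the squared singular values $\lambda_i^2$.

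The determinant is then the product of these eigenvalues, $\text{det}(B^\dagger B) = \prod_i \lambda_i^2$, where the product runs over all $N_g$ eigenvalues, padding with zeros whenever the rank of $B$ is smaller than $N_g$. This product vanishes if and only if at least one singular value is zero, i.e. if and only if $B$ possesses a right-singular vector with singular value zero. Invoking Theorem 1, the existence of such a right-singular vector is equivalent to the existence of a dark state, which closes the chain of equivalences and establishes the corollary in both directions.

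An alternative route would be to apply Lemma 1 directly, writing $\text{det}(B^\dagger B) = (-1)^N \text{det}(H)$ under the assumption $\text{det}(A) \neq 0$, and to argue that $\text{det}(B^\dagger B) = 0$ holds if and only if $H$ has a zero eigenvalue. I expect the main subtlety to lie here rather than in the singular-value bookkeeping: a vanishing $\text{det}(H)$ only guarantees the presence of \emph{some} zero-eigenvalue state, not that this state is supported entirely on the ground manifold, so this route still requires the full ``if and only if'' of Theorem 1 to promote a generic zero mode into a genuine dark state. For this reason the singular-value argument is the cleaner path, and the only care needed is to ensure that the zero eigenvalues arising when $N_g > \text{rank}(B)$ are correctly included in the determinant.
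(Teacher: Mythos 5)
Your proof is correct and follows essentially the same route as the paper: both reduce the statement to Theorem~1 by identifying the eigenvalues of $B^\dagger B$ with the squared singular values of $B$, so that $\det(B^\dagger B)=0$ is equivalent to the existence of a right-singular vector with zero singular value. If anything, your version is slightly more careful than the paper's, which states the forward implication explicitly but leaves the converse (that a vanishing determinant of the positive-semidefinite matrix $B^\dagger B$ forces a genuine null vector of $B$) implicit; your remark about why the Lemma~1 route alone would be insufficient is also a correct observation.
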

\begin{proof}
From Theorem 1, a dark state exists if and only if $ B|D_g \rangle = 0 $.  Then $ B^\dagger B $ is a square matrix that has a zero eigenvalue, implying  $ \text{det} (B^\dagger B) = 0 $. From Lemma 1, it also implies that $ H $ has a zero eigenvalue.  
\end{proof}
The above result gives a simply way of testing whether a given Hamiltonian possesses a dark state, without computing its eigenvalues directly, and furthermore examining only a submatrix of the whole Hamiltonian.

\begin{proposition}
Transition between excited states, i.e.  without any connections to any ground state, cannot influence the existence of the dark states of multi-level energy network for $\text{det} (A) \neq 0$.
\end{proposition}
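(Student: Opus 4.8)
The plan is to reduce the claim to the existence criterion already established in Corollary 1, since that criterion isolates exactly the part of the Hamiltonian on which dark states depend. Recall that, under the standing assumption $\text{det}(A) \neq 0$, Corollary 1 asserts that a dark state exists if and only if $\text{det}(B^\dagger B) = 0$. The entire content of the proposition then follows once I locate, within the block decomposition (\ref{chosen_form}), precisely where a transition between excited states enters.

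First I would note that, by construction, $A$ is the submatrix supported on the excited-state manifold while $B$ contains only the couplings between the excited-state and ground-state manifolds. Hence every amplitude $\Omega_{nm}$ coupling two excited states appears as an off-diagonal element of $A$ and nowhere in $B$. Any transition with no connection to a ground state --- whether it links two directly connected excited states, two disconnected excited states, or one of each --- therefore modifies $A$ alone and leaves $B$ entirely unchanged.

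Next I would invoke the criterion. Since the existence of a dark state is governed by the condition $\text{det}(B^\dagger B) = 0$, a quantity depending on $B$ only, and since excited-state transitions affect $A$ alone, switching such a transition on or off, or varying its amplitude, cannot alter $\text{det}(B^\dagger B)$. Consequently it cannot change whether the condition is met, so it cannot change whether a dark state exists. I would also remark that the same reasoning shows the dark-state wavefunction $|D_g\rangle$, being a zero-singular-value right-singular vector of $B$, is likewise untouched, although the proposition asks only about existence.

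The one point needing care is that the reduction to $\text{det}(B^\dagger B) = 0$ relies on the nonsingularity of $A$, which is the hypothesis of Lemma 1 and Corollary 1. The proposition restricts to $\text{det}(A) \neq 0$, so I would make explicit that the argument applies throughout the regime in which the excited-state transitions are varied while keeping $A$ nonsingular; within that regime the criterion holds uniformly and the conclusion follows. I expect no genuine obstacle here, as the algebraic work is already absorbed into Lemma 1 and Corollary 1, and what remains is simply to phrase the independence of the existence criterion from $A$ cleanly.
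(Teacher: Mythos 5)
Your proof is correct and rests on the same key observation as the paper's: excited--excited transitions enter only the block $A$, while the existence of a dark state is governed entirely by $B$ once $\det(A)\neq 0$ is assumed. The only cosmetic difference is that you invoke the criterion $\det(B^\dagger B)=0$ from Corollary 1 as a black box, whereas the paper re-inspects the proof of Theorem 1 to check that $A$ appears only in places where changing it (while keeping it nonsingular) is harmless --- the two routes are logically equivalent.
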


\begin{proof}
In the proof for Theorem 1, the only places where the submatrix $ A $ appear are in Eqs. (\ref{babyes}) and (\ref{babnone}), which are statements that are valid for the cases where a right-singular vector such that $ B | D_g \rangle = 0 $ either exists or does not exist, respectively.  Consider changing $ A $ in either of these cases, while maintaining that $\text{det} (A) \neq 0$.  In the case of (\ref{babyes}), the result is unchanged since $ B | D_g \rangle = 0 $.  In the case of (\ref{babnone}), the result is unchanged since the only result that is used here is $\text{det} (A) \neq 0$.  Hence the proof of Theorem 1 is unchanged, and transitions between excited states do not influence the existence of dark states.  
\end{proof}

The above result shows that in multi-level networks the excited states that have no direct connections with ground states can be ignored.  As mentioned above, the diagonal elements of submatrix $ A $ contain the detunings of the excited states.  The lack of dependence of the submatrix $ A $ with respect to the presence of dark states means that these detunings can be chosen freely.  For non-zero detunings of the excited states, this makes the assumption $ \text{det}(A) \neq 0$ not very restrictive.  For example, in the absence of excited-to-excited state transitions $ \Omega_{nm} = 0 $, in this case it follows that $ \text{det}(A) \neq 0$.  It is possible to have $ \text{det}(A) = 0$ for particular choices of $ \Omega_{ij} $ where level repulsion of the excited states align one of the energy levels to the ground state level.  However, this will typically require large and carefully chosen values of $ \Omega_{ij} $, and will not give zero energy state for arbitrary $ \Omega_{ij} $ as is required for a dark state.

\begin{proposition}
If the rank of $B$ is less than $N_g$, then a dark state exists.
\end{proposition}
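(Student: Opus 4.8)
The plan is to deduce the statement directly from Theorem 1, using nothing more than a count of singular values (equivalently, the rank-nullity theorem). By Theorem 1, a dark state exists if and only if $B$ admits a right-singular vector whose singular value is zero, which is equivalent to saying that $B$ has a nontrivial kernel. The entire task therefore reduces to showing that the hypothesis $\text{rank}(B) < N_g$ forces such a zero-singular-value vector to exist.

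First I would recall that $B$, being the $(N-N_g)\times N_g$ submatrix of (\ref{chosen_form}), acts as a linear map from the $N_g$-dimensional ground-state space to the $(N-N_g)$-dimensional excited-state space. The right-singular vectors $|v_i\rangle$ appearing in the singular value decomposition (\ref{svdb}) form an orthonormal basis of this $N_g$-dimensional domain, and the number of strictly positive singular values $\lambda_i$ equals $\text{rank}(B)$.

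Next I would simply count. With $\text{rank}(B)=r$, exactly $r$ of the singular values are nonzero and the remaining $N_g - r$ of them vanish. The hypothesis $r < N_g$ thus guarantees at least one vanishing singular value, say $\lambda_j = 0$, with associated right-singular vector $|v_j\rangle$. Taking $|D_g\rangle = |v_j\rangle$ yields $B|D_g\rangle = 0$, and by Theorem 1 the corresponding state (\ref{projdarkstate}) is a dark state of (\ref{chosen_form}), which completes the argument.

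I do not expect any genuine obstacle here: the result is an immediate corollary of Theorem 1 once the singular-value count is in place. The only point requiring care is keeping the domain and codomain of $B$ straight, so that the $N_g$ singular values (and hence the right-singular vectors $|v_i\rangle$) are counted in the ground-state space rather than in the larger excited-state space. A consequence worth flagging is that whenever there are more ground states than connected excited states, $N_g > N - N_g$, one has $\text{rank}(B) \le N - N_g < N_g$ automatically, so a dark state always exists in this regime with no further computation.
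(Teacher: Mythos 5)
Your proof is correct and follows essentially the same route as the paper: count the singular values, note that $\mathrm{rank}(B) < N_g$ forces at least one zero singular value of the map from the $N_g$-dimensional ground-state space, and invoke Theorem 1. The only cosmetic difference is that you apply Theorem 1's singular-vector characterization directly, whereas the paper phrases the conclusion through $\det(-B^\dagger A^{-1}B)=0$ and also records the (unrequested) converse for the full-rank case.
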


\begin{proof}
Recall $ B $ is a $ N_g \times (N- N_g ) $ matrix.  Hence if $ \text{Rank}(B) = N_g$, it is a full rank matrix and
\begin{align}
\text{det}(B) \neq 0 \implies \text{det}(-B^\dagger A^{-1}B)\neq 0 ,
\end{align}
again assuming $ \text{det} (A) \ne 0$.  Then from Theorem 1, a dark state does not exist. 

If $ \text{Rank}(B) < N_g$, then the number of non-zero singular values is less than the dimension of the space spanned by the ground states $ \{ | g_i \rangle \} $.  Then det$(-B^\dagger A^{-1}B)=0$ and a dark state exists. 
\end{proof}

We note a similar result was found in Ref. \cite{PhysRevA.99.053829}.  The above proposition leads to the following corollary which is useful when designing energy networks to ensure that a dark state is present.  
\begin{corollary}
A network where the number of directly connected excited states $ N_e $ is less than the number of ground state $ N_g$ has at least one dark state.
\end{corollary}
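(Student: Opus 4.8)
The plan is to reduce Corollary 3 directly to Proposition 2, which already guarantees that a dark state exists whenever $\text{Rank}(B) < N_g$. The entire task then reduces to a single quantitative statement: that the rank of $B$ is bounded above by the number of directly connected excited states $N_e$. Once this bound is in hand, the corollary follows in one line.

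First I would recall the structure of $B$ dictated by the problem setup in Sec.~\ref{sec:2}. The matrix $B$ collects all transition amplitudes between excited states and ground states, with entries $B_{nm} = \hbar \Omega_{nm}$. The excited states were explicitly partitioned into the $N_e$ states $|e_i\rangle$ that couple to at least one ground state and the remaining disconnected states $|e_i'\rangle$, which by definition satisfy $\Omega_{nm} = 0$ for every ground-state index $m$. Consequently, every slot of $B$ associated with a disconnected excited state $|e_i'\rangle$ is identically zero, and all nonzero entries of $B$ are confined to the $N_e$ slots belonging to the directly connected excited states.

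From this observation the rank bound is immediate: since the nonzero content of $B$ lives entirely in the $N_e$-dimensional subspace spanned by the directly connected excited states $\{|e_i\rangle\}$, the image of $B$ is contained in that subspace, so $\text{Rank}(B) \le N_e$. Invoking the hypothesis $N_e < N_g$ then yields
\begin{align}
\text{Rank}(B) \le N_e < N_g ,
\end{align}
and Proposition 2 applies verbatim to conclude that at least one dark state exists.

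The only real obstacle here is bookkeeping rather than mathematics: one must fix the row/column convention for $B$ (the excerpt alternates between describing $B$ as $(N-N_g)\times N_g$ and as $N_g\times(N-N_g)$) and verify that the "disconnected" label is being applied to the excited-state index, so that it is genuinely the excited-state slots that produce the vanishing rows (or columns) and hence the rank deficiency. With the convention pinned down, the proof is a direct rank count feeding into the already-established Proposition 2, and no further machinery is needed.
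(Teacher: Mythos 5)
Your proposal is correct and follows essentially the same route as the paper: the paper writes $B$ with the rows of disconnected excited states set to zero, so that $\text{Rank}(B)=\text{Rank}(\tilde{B})\le N_e<N_g$, and then invokes Proposition 2. Your remark about pinning down the row/column convention is a fair observation, but the argument is otherwise identical in substance.
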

\begin{proof}
Let us write
\begin{align}
B = \left[   \begin{array}{c}
    0 \\
    \tilde{B} 
    \end{array}
    \right]
    \label{bmatzeros}
\end{align}
where $ \tilde{B} $ is a submatrix of $ B $ involving elements with transitions that directly connect between the ground states $ | g_i \rangle $ and the excited states  $ | e_i \rangle $.  By definition, the remaining matrix elements of $ B $ involving disconnected excited states $ | e_i ' \rangle $ are zero.  
A full rank matrix has a rank that is the lesser of the number of rows and columns.  The rank of $ B $ is therefore
\begin{align}
\text{Rank}(B) = \text{Rank}(\tilde{B}) \le N_e.  
\end{align}
For the case that $ N_e < N_g $,  $ \text{Rank}(B) < N_g $.  Then by Proposition 2, a dark state exists. 
\end{proof}

Combining Proposition 1 and Corollary 2, one may easily see that in order to create an energy network that possesses a dark state, it is sufficient to demand that the number of excited states with direct energy connections to ground states is less than the number of ground states. It is clear from this observation that networks with a chain structure such as that given in Fig. 2(a) will always have a dark state, since $ N_e < N_g $.  However, with one additional excited state such that $ N_e = N_g$, the energy configuration will not possess a dark state in general. We note that Corollary 2 is only a sufficient condition for the existence of dark states, and it is possible to have dark states for $ N_e \ge N_g $ but this depends upon the particular connectivity of the energy levels.   

Finally, we discuss procedures of how to obtain the dark state $ | D \rangle $. The most direct way is via Theorem 1, where one finds the singular decomposition of $ B $, and finds the right-singular vector of the zero singular values. We first start with the following Lemma before giving the main result in Proposition 3.  
\begin{lemma}
The dark state $ | D_g \rangle $ is a state that is mutually orthogonal to all the rows of $ B^* $.  
\end{lemma}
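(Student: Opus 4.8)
The plan is to take Theorem 1 as the entire engine of the argument. Theorem 1 has already established that a dark state $|D_g\rangle$ is exactly a right-singular vector of $B$ with zero singular value, which is the same as the single equation $B|D_g\rangle=0$. The Lemma asserts nothing beyond this; it merely re-expresses that equation as an orthogonality statement between $|D_g\rangle$ and the rows of $B^*$. So the proof I would give reduces to reconciling the convention for matrix-vector multiplication with the convention for the braket inner product, and does not require any new analytic input.

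First I would write $B|D_g\rangle=0$ in components. Labelling the entries as $B_{nm}$ with $n$ indexing the $N-N_g$ rows and $m$ indexing the $N_g$ columns, the $n$-th component of $B|D_g\rangle$ is $\sum_m B_{nm}(D_g)_m$, so the vanishing of $B|D_g\rangle$ is precisely the family of scalar conditions $\sum_m B_{nm}(D_g)_m = 0$, one for each row index $n$. Next I would identify the $n$-th row of $B^*$ with a vector $|r_n\rangle$ in the $N_g$-dimensional ground-state space, whose $m$-th component is $(r_n)_m = B_{nm}^*$. Forming the inner product in the paper's braket convention, the bra supplies a second complex conjugation that cancels the one built into $B^*$, giving
\begin{align}
\langle r_n | D_g \rangle = \sum_m (r_n)_m^* (D_g)_m = \sum_m B_{nm}(D_g)_m = 0 .
\end{align}
Since this holds for every row index $n$, the state $|D_g\rangle$ is orthogonal to all the rows of $B^*$, which is exactly the claim.

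I expect the only genuine subtlety to be this conjugation bookkeeping, and it is precisely why the statement refers to $B^*$ and not to $B$ itself. The matrix-vector product $B|D_g\rangle$ contracts the \emph{unconjugated} rows of $B$ against the components of $|D_g\rangle$, whereas the braket $\langle \cdot | D_g\rangle$ contracts the \emph{conjugate} of the left vector against $|D_g\rangle$. Choosing the rows of $B^*$ as the left vectors is exactly what makes the two conjugations cancel, so that the orthogonality condition coincides with $B|D_g\rangle=0$ rather than with some conjugated variant. Once this correspondence is spelled out the Lemma follows immediately, with no computation beyond matching indices, and it sets up the row-orthogonality picture that Proposition 3 will exploit to build the dark state from a determinant.
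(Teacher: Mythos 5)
Your proof is correct and follows essentially the same route as the paper: both start from the condition $B|D_g\rangle=0$ supplied by Theorem 1, expand it componentwise, and observe that the conjugation in defining the row vectors of $B^*$ cancels against the conjugation in the bra, yielding $\langle b_i|D_g\rangle=0$ for each row. The only cosmetic difference is that the paper singles out the identically zero rows (those for disconnected excited states) via the submatrix $\tilde{B}$ and disposes of them separately, whereas you treat all rows uniformly; either handling is fine.
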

\begin{proof}
First define
\begin{align}
    |b_i \rangle = \sum_{j=1}^{N_g} \tilde{B}_{ij}^* | g_j \rangle ,   
    \label{bvecdef}
\end{align}
where $ i \in [1,N_e ] $ and we used the submatrix $\tilde{B}$ is defined in (\ref{bmatzeros}).  
This excludes matrix elements involving excited state that are not directly connected to the ground state.  From Theorem 1, the dark state obeys
\begin{align}
B |D_g \rangle = \sum_{i=1}^{N-N_g} \sum_{j=1}^{N_g} B_{ij} \langle g_j | D_g \rangle | e_i \rangle = 0 .  
\end{align}
We must then have $ \forall i $ 
\begin{align}
\sum_{j=1}^{N_g} B_{ij} \langle g_j | D_g \rangle = 0  .  
\end{align}
Using the definition (\ref{bvecdef}) we see that this is equivalent to 
\begin{align}
    \langle b_i | D_g \rangle = 0 . 
    \label{orthogonalbs}
\end{align}

The remaining rows of $ B^*$ are zero vectors from (\ref{bmatzeros}), hence give a zero inner product with $ |D_g \rangle $. 
\end{proof}


We now give an explicit expression for the dark state, which does not require direct diagonalization of the Hamiltonian, or performing a singular value decomposition of $ B $.

\begin{proposition}
Form the set $ \{  | b_i \rangle  \} $,  which are $ N_b $ potentially unnormalized, but linearly independent non-zero vectors, as defined in (\ref{bvecdef}). Suppose that $ N_b < N_g $ such that by Proposition 2, a dark state exists.  Construct the matrix
\begin{align}
{\cal D } = 
\left[ \begin{array}{cccc}
|g_1 \rangle  & | g_2 \rangle  & \dots &  | g_{N_g} \rangle\\
\langle b_1|g_1 \rangle  & \langle b_1| g_2 \rangle  & \dots & \langle b_1| g_{N_g} \rangle \\
\vdots &  & & \\
\langle b_{N_b}|g_1 \rangle &  \langle  b_{N_b} | g_2 \rangle  & \dots & \langle  b_{N_b} | g_{N_g} \rangle \\
\end{array} \right].
\label{caldmat}
\end{align}
Now delete $ N_g - N_b -1 $ columns from $ {\cal D }$ such that the linear independence of the bottom $ N_b $ rows of $ {\cal D } $ is preserved.  Define this square matrix as ${\cal \tilde{D} } $.  Then the dark state is given by 
\begin{align}
    |D \rangle = \det {\cal  \tilde{D} }  . 
    \label{darkstateformula}
\end{align}
\end{proposition}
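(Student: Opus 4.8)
The plan is to recognize (\ref{darkstateformula}) as the higher-dimensional analogue of the Laplace-expansion (generalized cross-product) construction that manufactures a vector orthogonal to a prescribed family of vectors, and then to verify the two properties that, by Lemma 2 and Theorem 1, characterize a dark state: that $|D\rangle$ lives entirely in the ground manifold and that it is orthogonal to every $|b_i\rangle$ of (\ref{bvecdef}).

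First I would expand $\det\tilde{\mathcal{D}}$ along its first row. Since we delete $N_g-N_b-1$ of the $N_g$ columns, the retained index set $S$ has $N_b+1$ elements, so $\tilde{\mathcal{D}}$ is $(N_b+1)\times(N_b+1)$ and its first row holds the kets $|g_j\rangle$ for $j\in S$. Cofactor expansion along this row then produces a vector
\[
|D\rangle=\sum_{j\in S}C_j\,|g_j\rangle,
\]
where each scalar cofactor $C_j$ is a signed $N_b\times N_b$ minor of the bottom block $(\langle b_i|g_j\rangle)$. Because every surviving term is a ground-state ket, $|D\rangle$ automatically has the block form (\ref{darkstate}) with $|D\rangle=P|D_g\rangle$, so the only nontrivial matter is to confirm the orthogonality (\ref{orthogonalbs}).

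Next I would establish $\langle b_i|D\rangle=0$ for each $i\in[1,N_b]$. Treating $\langle b_i|g_j\rangle$ as the matrix entries, $\langle b_i|D\rangle=\sum_{j\in S}\langle b_i|g_j\rangle\,C_j$ is exactly the cofactor expansion of the determinant of the matrix obtained from $\tilde{\mathcal{D}}$ by overwriting its first row with $(\langle b_i|g_j\rangle)_{j\in S}$; since the cofactors $C_j$ do not depend on the first row, this reproduces the same $C_j$. But that replacement row coincides with the $(i{+}1)$-th row of $\tilde{\mathcal{D}}$, so the resulting matrix has two identical rows and its determinant vanishes, giving $\langle b_i|D\rangle=0$. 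A point worth flagging is that the components of $|b_i\rangle$ along the \emph{deleted} ground states play no role, because $|D\rangle$ has no support on those basis vectors; the orthogonality therefore holds in the full ground manifold, not merely in the retained subspace. Combined with the remaining rows of $B^*$ being zero vectors (as in (\ref{bmatzeros})), Lemma 2 then identifies $|D_g\rangle$ as a dark state in the sense of Theorem 1.

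The main obstacle is to guarantee $|D\rangle\neq 0$, and this is precisely where the column-deletion prescription does its work. The cofactors $C_j$ are the maximal minors of the $N_b\times(N_b+1)$ block $(\langle b_i|g_j\rangle)_{i\in[1,N_b],\,j\in S}$, and at least one of them is nonzero if and only if this block has rank $N_b$, i.e.\ if and only if the bottom $N_b$ rows of $\tilde{\mathcal{D}}$ remain linearly independent after the deletion, which is exactly the stated requirement. I would also note that such a deletion always exists: since the $|b_i\rangle$ are linearly independent by hypothesis, the full $N_b\times N_g$ inner-product matrix already has rank $N_b$, so one may keep $N_b$ columns that realize this rank together with any one further column, for a total of $N_b+1$. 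With non-vanishing secured, $|D\rangle$ is a genuine (generally unnormalized) dark state realizing the one guaranteed by Proposition 2, and normalization may be imposed afterward.
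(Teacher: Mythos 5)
Your proof is correct, and it is executed differently from the paper's. The paper invokes the generalized cross product as a known construction (with citations) for the square case $N_b = N_g-1$, and then handles $N_b < N_g-1$ by a separate projection argument, introducing an operator $P_b$ that projects onto a subspace and arguing that orthogonality in the projected space lifts to the full ground manifold. You instead give a single self-contained argument covering both cases at once: Laplace expansion of $\det\tilde{\cal D}$ along the first row yields $|D\rangle = \sum_{j\in S} C_j |g_j\rangle$, and each $\langle b_i | D\rangle$ is the determinant of a matrix with a repeated row, hence zero. Your observation that the components of $|b_i\rangle$ along the deleted columns are irrelevant (because $|D\rangle$ has no support there) is exactly the point that the paper's projection construction is designed to handle, and your version is cleaner. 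You also supply two items the paper only gestures at: the proof that $|D\rangle \neq 0$ precisely when the retained $N_b\times(N_b+1)$ block has full rank $N_b$ (i.e.\ that the linear-independence condition on the deletion is both necessary and sufficient for a nonzero result), and the proof that such a deletion always exists because the linearly independent $|b_i\rangle$ give a coordinate matrix of rank $N_b$. The only point worth adding for completeness is that orthogonality to the retained linearly independent $|b_i\rangle$ implies orthogonality to any rows of $B^*$ discarded as linearly dependent, so that Lemma~2 and Theorem~1 apply to the full matrix $B$; this is immediate, and is essentially the content of the paper's Corollary~3.
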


\begin{proof}
The set of $ \{  | b_i \rangle  \} $ is formed according to the definition (\ref{bvecdef}), but deleting any vectors that are linearly dependent.  Hence $ N_b \le N_e$.  In order to construct the dark state, we require a state that is mutually orthogonal to the vectors $ \{ | b_i \rangle \} $ according to Lemma 2. 
First consider the case $ N_b = N_g - 1$.  In this case the dark state can be found using the generalized cross product \cite{PhysRevA.99.053829,PhysRevA.81.053403} of the matrix $ {\cal  \tilde{D} } = {\cal D } $, since the cross product produces an orthogonal vector to the input vectors.  Using the definition of the generalized cross product (\ref{darkstateformula}) and Lemma 2, we obtain the dark state as given in (\ref{darkstateformula}). 

For cases where  $ N_b < N_g - 1$, the orthogonal vector is not unique since the dimension spanned by $ \{ | b_i \rangle  \} $ is at most $ N_b $. To find a unique orthogonal vector to the spanned space using the generalized cross product, we project out $ N_g - N_b -1 $ of the ground states using the operator
\begin{align}
    P_{b} = I - \sum_{j \in b_{\text{del}} } \frac{| b_j \rangle \langle b_j |}{\langle b_j | b_j \rangle }
\end{align}
where $ b_{\text{del}} $ labels the set of ground states that are deleted.  The vectors in the projected subspace are then 
\begin{align}
   |P( b_i) \rangle = P_b | b_i  \rangle .  
\end{align}
The linear independence property must be maintained to ensure a non-zero result in the generalized cross product.  Forming a matrix similar to (\ref{caldmat}) in the reduced subspace and taking the determinant then gives an orthogonal state to the $ \{ |P( b_i) \rangle \} $.  This state $ |D_g \rangle $ is a state that is orthogonal to the projected vectors $ \langle P(b_i) | D_g \rangle = 0 $ while lying in the space defined by $ P_b $
\begin{align}
    P_b | D_g \rangle = |D_g \rangle.  
\end{align}
As such, it is also orthogonal to the set of $ \{  | b_i \rangle  \} $, and is a dark state according to Lemma 2.  
\end{proof}

Proposition 3 allows one to obtain the dark state without performing a singular value decomposition of $ B $. We can make an extension of this result such that the matrix elements of $ \cal D $ are not necessarily the matrix elements of $ B^*$.  

\begin{corollary}
Construct the matrix ${\cal D }$, defined in (\ref{caldmat}), with the replacement $ |b_i \rangle \rightarrow |b_i' \rangle $, where $ \{ | b_i' \rangle  \} $ a linear combination of $ \{ |b_i \rangle \} $, such that the span of the two sets are the same.  The dark state still takes the form as given in (\ref{darkstateformula}) under this replacement. 
\end{corollary}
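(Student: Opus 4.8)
The plan is to show that the determinant formula (\ref{darkstateformula}) depends on the vectors $\{ |b_i\rangle \}$ only through their linear span, so that replacing them by any other basis $\{ |b_i'\rangle \}$ of the same span changes the output by only an overall nonzero scalar, which leaves the dark state unchanged up to normalization.

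First I would note that, since $\{ |b_i'\rangle \}$ spans the same space as $\{ |b_i\rangle \}$ and must remain linearly independent for the determinant construction to yield a nonzero vector, both sets contain exactly $N_b$ vectors. Hence there is an invertible $N_b \times N_b$ matrix $M$ with $|b_i'\rangle = \sum_k M_{ik} |b_k\rangle$. Taking the adjoint and pairing with $|g_j\rangle$ gives $\langle b_i'|g_j\rangle = \sum_k M_{ik}^* \langle b_k|g_j\rangle$, so that under the replacement the bottom $N_b$ rows of $\tilde{\cal D}$, that is the purely numerical block with entries $\langle b_i|g_j\rangle$, are left-multiplied by $M^*$, while the top row of ground-state vectors $|g_j\rangle$ is untouched.

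Next I would expand $\det \tilde{\cal D}$ along its top row, as in the generalized cross product: the coefficient of each $|g_j\rangle$ is, up to sign, the minor obtained by deleting the first row and the column carrying $|g_j\rangle$, and this minor involves only the numerical bottom block. The key algebraic step is that deleting a column acts on the right while multiplication by $M^*$ acts on the left, so the two operations commute; consequently the $j$-th minor of the replaced matrix equals $\det(M^*)$ times the $j$-th minor of the original. Every coefficient of $|g_j\rangle$ therefore acquires the common factor $\det(M^*)$, giving $|D'\rangle = \det(M^*)\, |D\rangle$. Since $M$ is invertible, $\det(M^*) \neq 0$, and the two expressions define the same physical state after normalization.

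Finally I would dispose of the projected case $N_b < N_g - 1$ of Proposition 3. There the projector $P_b$ that restores uniqueness of the cross product depends only on the subspace of deleted $b$-vectors; since the replacement preserves the full span, one may use the same projection subspace, and the projected vectors $\{ P_b |b_i'\rangle \}$ span the same space as $\{ P_b |b_i\rangle \}$ and remain linearly independent. The argument above then applies verbatim inside the reduced square matrix. The main obstacle I anticipate is bookkeeping rather than depth: verifying that the linear-independence hypothesis of Proposition 3 is genuinely inherited by $\{ |b_i'\rangle \}$, so that $M$ is invertible rather than merely injective, and stating the column-deletion and left-multiplication commutation cleanly at the level of the minors rather than of the full matrix.
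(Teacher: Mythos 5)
Your proof is correct, but it takes a genuinely different route from the paper's. The paper argues at the level of the \emph{characterization} of the dark state: by Lemma 2 the dark state is determined by orthogonality to the span of $\{ |b_i\rangle \}$, the span is unchanged under the replacement, so the condition $\langle b_i'|D_g\rangle = 0$ is equivalent to the original one and "the arguments of Proposition 3 can be repeated." You instead work at the level of the \emph{formula}: you write the basis change as an invertible matrix $M$, observe that the numerical bottom block of $\tilde{\cal D}$ is left-multiplied by $M^*$ while column deletion acts on the right, and conclude from the cofactor expansion along the top row that every minor — hence the whole output — picks up the common factor $\det(M^*)\neq 0$, giving $|D'\rangle = \det(M^*)\,|D\rangle$ explicitly. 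Your version buys two things the paper's does not make explicit: the exact proportionality constant, and a direct verification that the linear-independence hypothesis (and hence the validity of the same column deletion in the $N_b < N_g-1$ case) is inherited because $M^*$ is invertible and preserves the rank of any column-deleted submatrix. The paper's version is shorter and emphasizes that both constructions solve the same orthogonality problem, but it implicitly leans on the uniqueness (up to scale) of the orthogonal vector produced by the generalized cross product, which your computation establishes directly.
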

\begin{proof}
First we show that using the set of vectors $ \{  | b_i' \rangle  \} $ can be equivalently used to find the dark state in place of $ \{  | b_i \rangle  \} $.  From Lemma 2 we observe the the dark state must be orthogonal to the space spanned by the set of vectors  $ \{  | b_i \rangle  \} $.  A linear combination of such vectors spans the same space as these vectors, hence we may equivalently write the condition of the dark state as 
\begin{align}
    \langle b_i' | D_g \rangle = 0 . 
\end{align}
Then the arguments of Proposition 3 can be repeated with the vectors $ \{  | b_i' \rangle  \} $, which yields the same dark state. 
\end{proof}

The flexibility of using the vectors $ | b_i' \rangle $ instead of $ | b_i \rangle $, is useful when verifying the linear independence property, using methods such as Gaussian elimination.  This will be illustrated in an example in the next section.

\section{Examples}
\label{sec:4}

We now illustrate the use of our results by showing several examples.

\subsection{$\Lambda$ System}


Let us first consider the most elementary case of the three-level $\Lambda$ system. The Hamiltonian is 
\begin{align}
H=
\left[ \begin{array}{ccc}
\Delta_1 & \Omega_1 & \Omega_2\\
\Omega_1^* & 0 & 0\\
\Omega_2^* & 0 & 0
\end{array} \right].
\end{align}
The submatrices in the form of (\ref{chosen_form}) are then
\begin{align}
A & = \left[ \Delta_1 \right]  \\
B & =
\left[ \begin{array}{cc}
\Omega_1 & \Omega_2
\end{array} \right] .
\end{align}

To find the dark state, we first construct the vector corresponding to the row of $ B^*$
\begin{align}
|b_1 \rangle = \Omega_1^* |2 \rangle + \Omega_2^* |3 \rangle . 
\end{align}
The matrix (\ref{caldmat}) is then 
\begin{align}
    {\cal D} = 
\left[ \begin{array}{cc}
|2 \rangle & | 3 \rangle \\
\Omega_1 & \Omega_2
\end{array} \right] .
\end{align}
The dark state is then by Proposition 3
\begin{align}
|D\rangle = \text{det} {\cal D}  =  \Omega_2 |2\rangle - \Omega_1 |3\rangle 
\end{align}
where we have omitted the normalization constant.  According to Lemma 2, this state must be orthogonal to $ |b_1 \rangle $, which can be easily verified.  One can obtain the same results by directly calculating the dark state from the singular value decomposition of $B$, where there is one singular value equal to zero.

\subsection{3 layer pyramid}

\begin{figure}[t]
\includegraphics[width=\linewidth]{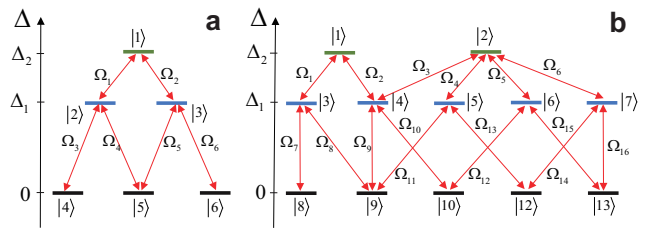}
\caption{Two multi-level energy configurations considered in Sec. \ref{sec:4}.  A (a) $ N = 6  $, $ N_g = 3 $, $ N_e = 2 $; (b) $ N = 13  $, $ N_g = 5 $, $ N_e = 5 $  level system. We use the Hamiltonian after performing the transformation (\ref{unitarytrans}) such that the transition amplitudes are time-inedependent and the associated detunings are marked. }
\label{Fig2.2}
\end{figure}

We next consider the 3 layer pyramid structure as shown in Fig. \ref{Fig2.2}(a).  Writing the Hamiltonian in the form (\ref{chosen_form}), the submatrices are 
\begin{align}
A =
\left[ \begin{array}{ccc}
\Delta_2 & \Omega_1 & \Omega_2 \\
\Omega_1^* & \Delta_1 & 0\\
\Omega_2^* & 0 & \Delta_1\\
\end{array} \right],
\end{align}
and
\begin{align}B = 
\left[ \begin{array}{ccc}
0 & 0 & 0\\
\Omega_3 & \Omega_4 & 0\\
0 & \Omega_5 &\Omega_6\\
\end{array} \right].
\end{align}

Here det$(A) \neq 0$ is satisfied, as long as $ |\Omega_1 |^2 + |\Omega_2 |^2 \ne \Delta_1 \Delta_2$.  Since we aim to have a dark state which has a zero energy for arbitrary values of $ \Omega_i$, we consider this an exceptional case that only gives a zero energy state for specific values of the amplitudes $ \Omega_1 ,\Omega_2$.  

In this example, $ N_e = 2 $ and $ N_g = 3 $, hence Corollary 2 applies and we may already immediately conclude that a dark state exists.  The find the dark state, construct the vectors according to the rows of $ B^*$
\begin{align}
|b_2 \rangle & = \Omega_3^* | 4 \rangle +  \Omega_4^* | 5 \rangle \\
|b_3 \rangle & = \Omega_5^* | 5 \rangle +  \Omega_6^* | 6 \rangle ,  
\label{b2b3}
\end{align}
where we have omitted the normalization factor.  The matrix (\ref{caldmat}) is then 
\begin{align}
    {\cal D} = 
\left[ \begin{array}{ccc}
|4 \rangle & | 5 \rangle & | 6 \rangle \\
\Omega_3 & \Omega_4 & 0 \\
0 & \Omega_5 &\Omega_6 \\
\end{array} \right] .
\end{align}
The dark state is thus
\begin{align}
|D\rangle = \text{det} {\cal D}  = \Omega_4 \Omega_6|4\rangle - \Omega_3 \Omega_6 |5\rangle + \Omega_3 \Omega_5 |6\rangle ,
\end{align}
where again we have omitted the normalization constant. As implied by Proposition 1, there is no dependence of the dark state on any of the parameters that appear in submatrix $ A $. This state is orthogonal to (\ref{b2b3}) as stated in Lemma 2.


\subsection{12 level system}

The two previous examples are well-known energy network structures where it is known that a dark state is present.  We now consider a more non-trivial 12 level example as shown in Fig. \ref{Fig2.2}(b).  The submatrices of the Hamiltonian in the form (\ref{chosen_form}) are
\begin{align}
A=
\left[ \begin{array}{ccccccc}
\Delta_2 & 0 & \Omega_1 & \Omega_2 & 0 & 0 & 0\\
0 & \Delta_2 & 0  & \Omega_3 & \Omega_4 & \Omega_5 & \Omega_6\\
\Omega_1^* & 0 & \Delta_1 & 0 & 0 & 0 & 0\\
\Omega_2^* & \Omega_3^* & 0  & \Delta_1 & 0 & 0 & 0\\
0 & \Omega_4^* & 0 & 0 & \Delta_1 & 0 & 0\\
0 & \Omega_5^* & 0 & 0 & 0 & \Delta_1 & 0\\
0 & \Omega_6^* & 0 & 0 & 0 & 0 & \Delta_1
\end{array} \right] 
\end{align}
and
\begin{align}
B=
\left[ \begin{array}{ccccc}
0 & 0 & 0 & 0 & 0\\
0 & 0 & 0 & 0 & 0\\
\Omega_7 & \Omega_8 & 0 & 0 & 0\\
0 & \Omega_9 & \Omega_{10} & 0 & 0\\
0 & \Omega_{11} & 0 & \Omega_{12} & 0\\
0 & 0 & \Omega_{13} & 0 & \Omega_{14}\\
0 & 0 & 0 & \Omega_{15} & \Omega_{16}
\end{array} \right] 
\end{align}
While numerically diagonalizing a $ 12 \times 12 $ Hamiltonian is not a difficult task, obtaining an analytic expression for the dark states is not as easy.  For example, we were unable to diagonalize the Hamiltonian (\ref{chosen_form}) for $ A, B$  as above using a symbolic computational program such as Mathematica, to show that a dark state exists.  

For this particular example, $ N_g = N_e$, hence we cannot immediately apply Corollary 2, which ensures the existence of a dark state.  Let us first attempt to apply Proposition 2, which requires evaluation of the rank of $ B $. 
Performing Gaussian elimination on $ B $, we obtain
\begin{align}
\left[ \begin{array}{ccccc}
\Omega_7 & \Omega_8 & 0 & 0 & 0\\
0 & \Omega_9 & \Omega_{10} & 0 & 0\\
0 & 0 & \Omega_{13} & 0 & \Omega_{14}\\
0 & 0 & 0 & \Omega_{15} & \Omega_{16} \\
0 & 0 & 0 & 0 & \chi  \\
0 & 0 & 0 & 0 & 0 \\
0 & 0 & 0 & 0 & 0 
\end{array} \right] 
\label{gausselimb}
\end{align}
where
\begin{align}
\chi =  \Omega_9 \Omega_{12} \Omega_{13} \Omega_{16} - \Omega_{10} \Omega_{11} \Omega_{14} \Omega_{15} .  
\end{align}
Hence in general  $ \text{Rank}(B)= N_g  = 5  $, unless 
\begin{align}
\Omega_9 \Omega_{12} \Omega_{13} \Omega_{16} = \Omega_{10} \Omega_{11} \Omega_{14} \Omega_{15}
\label{omegadeps}
\end{align}
where the $ \text{Rank}(B) = 4 $.  This equation can be satisfied by introducing the dependence of one of the transitions implied by (\ref{omegadeps}).  Alternatively, a physically simpler procedure involves setting one of the transitions in $ \{ \Omega_9,  \Omega_{12}, \Omega_{13}, \Omega_{16} \} $  and another one of the transitions in $ \{\Omega_{10}, \Omega_{11} \Omega_{14}, \Omega_{15} \} $ to zero.  Either way (\ref{omegadeps}) is satisfied and by Proposition 2, a dark state exists.   


Now we compute explicitly the dark states. We use the result of Corollary 3, which states that we may use any equivalent set of linearly independent states $ |b_i' \rangle $, not necessarily the vectors $ |b_i \rangle $ which are formed from the rows of $ B^*$.  We thus use (\ref{gausselimb}) to form the matrix
\begin{align}
{\cal D} = 
\left[ \begin{array}{ccccc}
|8 \rangle  & |9 \rangle & |10 \rangle & |11 \rangle & |12 \rangle \\
\Omega_7 & \Omega_8 & 0 & 0 & 0\\
0 & \Omega_9 & \Omega_{10} & 0 & 0\\
0 & 0 & \Omega_{13} & 0 & \Omega_{14}\\
0 & 0 & 0 & \Omega_{15} & \Omega_{16} 
\end{array} \right] 
\end{align}
Then according to Proposition 3, the dark state is 
\begin{align}
|D \rangle = & \text{det} {\cal D } \nonumber \\
  = &  -\Omega_8  \Omega_{10}  \Omega_{14}  \Omega_{15} | 8 \rangle
+ \Omega_7  \Omega_{10}  \Omega_{14}  \Omega_{15} | 9 \rangle  \nonumber \\
 & -\Omega_7  \Omega_9  \Omega_{14}  \Omega_{15} | 10 \rangle 
-\Omega_7  \Omega_9  \Omega_{13}  \Omega_{16} | 11 \rangle \nonumber \\
 & + \Omega_7 \Omega_{9} \Omega_{13}  \Omega_{15} | 12 \rangle 
\end{align}
omitting normalization factors.  This example shows that one may start with a system that originally does not contain a dark state, but additional conditions may be deduced and applied such that the presence of a dark state is ensured.



\section{Summary and conclusions}
\label{sec:6}

We have presented several results to determine the existence of a dark state in a multi-level energy system.  The key result is given in Theorem 1, which states that the dark states are zero eigenvalue states of the submatrix $ B $, as defined in the Hamiltonian (\ref{chosen_form}).  From this basic result, several other key results were deduced, such as the independence on transitions not involving the ground states (Proposition 1) and the relationship between the presence of a dark state and the rank of $ B $ (Proposition 2).  This also allowed us to obtain an explicit expression for the dark state in terms of the determinant of a matrix constructed using the elements of $ B $.  Our set of results allows for the simple determination of whether a dark state is present in an energy network of the general form as given in Fig. \ref{Fig1}(b).  The utility of our methods is that one does not need to explicitly diagonalize the full Hamiltonian in order to determine whether a dark state exists.  According to Corollary 1, finding the determinant of $ B^\dagger B $ allows one to detect whether a dark state is present.  Furthermore, using Proposition 3, one may explicitly calculate the dark state, again using the determinant of a matrix calculated from $ B $.  

While we did not include effects of dissipation is our analysis, the results of Ref. \cite{PhysRevA.99.053829} would suggest that these are unimportant so long as only relaxation effects towards the ground state are considered. The reason is that this type of relaxation by definition has no effect on dark states, which consist entirely of ground states.  In some applications of dark states, such as electromagnetically induced transparency such an assumption may not necessarily be true and may be scope for further work.  

We expect our results to be useful in obtaining analytic expressions for the dark states.  In several applications of dark states it is useful to know the analytic form of a dark state.  While numerical diagonalization can be performed for high dimensional Hamiltonians, obtaining analytical expressions can be much more difficult, using symbolic algorithms (e.g. Mathematica).  Using our methods, one can find the the existence and determine  a dark state simply by evaluating the determinant of a matrix, which can be done efficiently.  In this way, large scale multi-level networks can also be handled.  Such systems may occur in bosonic systems where the indistinguishable nature of the particles produce naturally large Hilbert spaces, that do not always reduce to effectively single particles \cite{PhysRevA.98.043616,abdelrahman2014coherent,byrnes2021quantum}.

\section{Acknowledgements}

This work is supported by the National Natural Science Foundation of China (62071301); NYU-ECNU Institute of Physics at NYU Shanghai; the Joint Physics Research Institute Challenge Grant; the Science and Technology Commission of Shanghai Municipality (19XD1423000,22ZR1444600); the NYU Shanghai Boost Fund; the China Foreign Experts Program (G2021013002L); the NYU Shanghai Major-Grants Seed Fund.

\bibliography{ref}

\begin{thebibliography}{38}%
\makeatletter
\providecommand \@ifxundefined [1]{%
 \@ifx{#1\undefined}
}%
\providecommand \@ifnum [1]{%
 \ifnum #1\expandafter \@firstoftwo
 \else \expandafter \@secondoftwo
 \fi
}%
\providecommand \@ifx [1]{%
 \ifx #1\expandafter \@firstoftwo
 \else \expandafter \@secondoftwo
 \fi
}%
\providecommand \natexlab [1]{#1}%
\providecommand \enquote  [1]{``#1''}%
\providecommand \bibnamefont  [1]{#1}%
\providecommand \bibfnamefont [1]{#1}%
\providecommand \citenamefont [1]{#1}%
\providecommand \href@noop [0]{\@secondoftwo}%
\providecommand \href [0]{\begingroup \@sanitize@url \@href}%
\providecommand \@href[1]{\@@startlink{#1}\@@href}%
\providecommand \@@href[1]{\endgroup#1\@@endlink}%
\providecommand \@sanitize@url [0]{\catcode `\\12\catcode `\$12\catcode
  `\&12\catcode `\#12\catcode `\^12\catcode `\_12\catcode `\%12\relax}%
\providecommand \@@startlink[1]{}%
\providecommand \@@endlink[0]{}%
\providecommand \url  [0]{\begingroup\@sanitize@url \@url }%
\providecommand \@url [1]{\endgroup\@href {#1}{\urlprefix }}%
\providecommand \urlprefix  [0]{URL }%
\providecommand \Eprint [0]{\href }%
\providecommand \doibase [0]{http://dx.doi.org/}%
\providecommand \selectlanguage [0]{\@gobble}%
\providecommand \bibinfo  [0]{\@secondoftwo}%
\providecommand \bibfield  [0]{\@secondoftwo}%
\providecommand \translation [1]{[#1]}%
\providecommand \BibitemOpen [0]{}%
\providecommand \bibitemStop [0]{}%
\providecommand \bibitemNoStop [0]{.\EOS\space}%
\providecommand \EOS [0]{\spacefactor3000\relax}%
\providecommand \BibitemShut  [1]{\csname bibitem#1\endcsname}%
\let\auto@bib@innerbib\@empty
\bibitem [{\citenamefont {Fleischhauer}\ and\ \citenamefont
  {Lukin}(2000)}]{PhysRevLett.84.5094}%
  \BibitemOpen
  \bibfield  {author} {\bibinfo {author} {\bibfnamefont {M.}~\bibnamefont
  {Fleischhauer}}\ and\ \bibinfo {author} {\bibfnamefont {M.~D.}\ \bibnamefont
  {Lukin}},\ }\href {\doibase 10.1103/PhysRevLett.84.5094} {\bibfield
  {journal} {\bibinfo  {journal} {Phys. Rev. Lett.}\ }\textbf {\bibinfo
  {volume} {84}},\ \bibinfo {pages} {5094} (\bibinfo {year}
  {2000})}\BibitemShut {NoStop}%
\bibitem [{\citenamefont {Unanyan}\ \emph {et~al.}(1998)\citenamefont
  {Unanyan}, \citenamefont {Fleischhauer}, \citenamefont {Shore},\ and\
  \citenamefont {Bergmann}}]{UNANYAN1998144}%
  \BibitemOpen
  \bibfield  {author} {\bibinfo {author} {\bibfnamefont {R.}~\bibnamefont
  {Unanyan}}, \bibinfo {author} {\bibfnamefont {M.}~\bibnamefont
  {Fleischhauer}}, \bibinfo {author} {\bibfnamefont {B.}~\bibnamefont {Shore}},
  \ and\ \bibinfo {author} {\bibfnamefont {K.}~\bibnamefont {Bergmann}},\
  }\href {\doibase https://doi.org/10.1016/S0030-4018(98)00358-7} {\bibfield
  {journal} {\bibinfo  {journal} {Optics Communications}\ }\textbf {\bibinfo
  {volume} {155}},\ \bibinfo {pages} {144} (\bibinfo {year}
  {1998})}\BibitemShut {NoStop}%
\bibitem [{\citenamefont {Bajcsy}\ \emph {et~al.}(2009)\citenamefont {Bajcsy},
  \citenamefont {Hofferberth}, \citenamefont {Balic}, \citenamefont {Peyronel},
  \citenamefont {Hafezi}, \citenamefont {Zibrov}, \citenamefont {Vuletic},\
  and\ \citenamefont {Lukin}}]{bajcsy2009efficient}%
  \BibitemOpen
  \bibfield  {author} {\bibinfo {author} {\bibfnamefont {M.}~\bibnamefont
  {Bajcsy}}, \bibinfo {author} {\bibfnamefont {S.}~\bibnamefont {Hofferberth}},
  \bibinfo {author} {\bibfnamefont {V.}~\bibnamefont {Balic}}, \bibinfo
  {author} {\bibfnamefont {T.}~\bibnamefont {Peyronel}}, \bibinfo {author}
  {\bibfnamefont {M.}~\bibnamefont {Hafezi}}, \bibinfo {author} {\bibfnamefont
  {A.~S.}\ \bibnamefont {Zibrov}}, \bibinfo {author} {\bibfnamefont
  {V.}~\bibnamefont {Vuletic}}, \ and\ \bibinfo {author} {\bibfnamefont
  {M.~D.}\ \bibnamefont {Lukin}},\ }\href@noop {} {\bibfield  {journal}
  {\bibinfo  {journal} {Physical review letters}\ }\textbf {\bibinfo {volume}
  {102}},\ \bibinfo {pages} {203902} (\bibinfo {year} {2009})}\BibitemShut
  {NoStop}%
\bibitem [{\citenamefont {Donarini}\ \emph {et~al.}(2019)\citenamefont
  {Donarini}, \citenamefont {Niklas}, \citenamefont {Schafberger},
  \citenamefont {Paradiso}, \citenamefont {Strunk},\ and\ \citenamefont
  {Grifoni}}]{donarini2019coherent}%
  \BibitemOpen
  \bibfield  {author} {\bibinfo {author} {\bibfnamefont {A.}~\bibnamefont
  {Donarini}}, \bibinfo {author} {\bibfnamefont {M.}~\bibnamefont {Niklas}},
  \bibinfo {author} {\bibfnamefont {M.}~\bibnamefont {Schafberger}}, \bibinfo
  {author} {\bibfnamefont {N.}~\bibnamefont {Paradiso}}, \bibinfo {author}
  {\bibfnamefont {C.}~\bibnamefont {Strunk}}, \ and\ \bibinfo {author}
  {\bibfnamefont {M.}~\bibnamefont {Grifoni}},\ }\href@noop {} {\bibfield
  {journal} {\bibinfo  {journal} {Nature communications}\ }\textbf {\bibinfo
  {volume} {10}},\ \bibinfo {pages} {1} (\bibinfo {year} {2019})}\BibitemShut
  {NoStop}%
\bibitem [{\citenamefont {Moeini}\ \emph {et~al.}(2019)\citenamefont {Moeini},
  \citenamefont {Kazemi},\ and\ \citenamefont
  {Mahmoudi}}]{doi:10.1080/09500340.2019.1646827}%
  \BibitemOpen
  \bibfield  {author} {\bibinfo {author} {\bibfnamefont {V.}~\bibnamefont
  {Moeini}}, \bibinfo {author} {\bibfnamefont {S.~H.}\ \bibnamefont {Kazemi}},
  \ and\ \bibinfo {author} {\bibfnamefont {M.}~\bibnamefont {Mahmoudi}},\
  }\href {\doibase 10.1080/09500340.2019.1646827} {\bibfield  {journal}
  {\bibinfo  {journal} {Journal of Modern Optics}\ }\textbf {\bibinfo {volume}
  {66}},\ \bibinfo {pages} {1535} (\bibinfo {year} {2019})}\BibitemShut
  {NoStop}%
\bibitem [{\citenamefont {Yang}(1950)}]{selection}%
  \BibitemOpen
  \bibfield  {author} {\bibinfo {author} {\bibfnamefont {C.~N.}\ \bibnamefont
  {Yang}},\ }\href {\doibase 10.1103/PhysRev.77.242} {\bibfield  {journal}
  {\bibinfo  {journal} {Phys. Rev.}\ }\textbf {\bibinfo {volume} {77}},\
  \bibinfo {pages} {242} (\bibinfo {year} {1950})}\BibitemShut {NoStop}%
\bibitem [{\citenamefont {Radmore}\ and\ \citenamefont
  {Knight}(1982)}]{Radmore_1982}%
  \BibitemOpen
  \bibfield  {author} {\bibinfo {author} {\bibfnamefont {P.~M.}\ \bibnamefont
  {Radmore}}\ and\ \bibinfo {author} {\bibfnamefont {P.~L.}\ \bibnamefont
  {Knight}},\ }\href@noop {} {\bibfield  {journal} {\bibinfo  {journal}
  {Journal of Physics B: Atomic and Molecular Physics}\ }\textbf {\bibinfo
  {volume} {15}},\ \bibinfo {pages} {561} (\bibinfo {year} {1982})}\BibitemShut
  {NoStop}%
\bibitem [{\citenamefont {Kumar}\ \emph {et~al.}(2016)\citenamefont {Kumar},
  \citenamefont {Veps{\"a}l{\"a}inen}, \citenamefont {Danilin},\ and\
  \citenamefont {Paraoanu}}]{STIRAP.superconduct}%
  \BibitemOpen
  \bibfield  {author} {\bibinfo {author} {\bibfnamefont {K.~S.}\ \bibnamefont
  {Kumar}}, \bibinfo {author} {\bibfnamefont {A.}~\bibnamefont
  {Veps{\"a}l{\"a}inen}}, \bibinfo {author} {\bibfnamefont {S.}~\bibnamefont
  {Danilin}}, \ and\ \bibinfo {author} {\bibfnamefont {G.~S.}\ \bibnamefont
  {Paraoanu}},\ }\href@noop {} {\bibfield  {journal} {\bibinfo  {journal}
  {Nature Communications}\ }\textbf {\bibinfo {volume} {7}},\ \bibinfo {pages}
  {10628} (\bibinfo {year} {2016})}\BibitemShut {NoStop}%
\bibitem [{\citenamefont {S{\o}rensen}\ \emph {et~al.}(2006)\citenamefont
  {S{\o}rensen}, \citenamefont {M{\o}ller}, \citenamefont {Iversen},
  \citenamefont {Thomsen}, \citenamefont {Jensen}, \citenamefont {Staanum},
  \citenamefont {Voigt},\ and\ \citenamefont {Drewsen}}]{S_rensen_2006}%
  \BibitemOpen
  \bibfield  {author} {\bibinfo {author} {\bibfnamefont {J.~L.}\ \bibnamefont
  {S{\o}rensen}}, \bibinfo {author} {\bibfnamefont {D.}~\bibnamefont
  {M{\o}ller}}, \bibinfo {author} {\bibfnamefont {T.}~\bibnamefont {Iversen}},
  \bibinfo {author} {\bibfnamefont {J.~B.}\ \bibnamefont {Thomsen}}, \bibinfo
  {author} {\bibfnamefont {F.}~\bibnamefont {Jensen}}, \bibinfo {author}
  {\bibfnamefont {P.}~\bibnamefont {Staanum}}, \bibinfo {author} {\bibfnamefont
  {D.}~\bibnamefont {Voigt}}, \ and\ \bibinfo {author} {\bibfnamefont
  {M.}~\bibnamefont {Drewsen}},\ }\href@noop {} {\bibfield  {journal} {\bibinfo
   {journal} {New Journal of Physics}\ }\textbf {\bibinfo {volume} {8}},\
  \bibinfo {pages} {261} (\bibinfo {year} {2006})}\BibitemShut {NoStop}%
\bibitem [{\citenamefont {Vitanov}\ \emph {et~al.}(2017)\citenamefont
  {Vitanov}, \citenamefont {Rangelov}, \citenamefont {Shore},\ and\
  \citenamefont {Bergmann}}]{RevModPhys.89.015006}%
  \BibitemOpen
  \bibfield  {author} {\bibinfo {author} {\bibfnamefont {N.~V.}\ \bibnamefont
  {Vitanov}}, \bibinfo {author} {\bibfnamefont {A.~A.}\ \bibnamefont
  {Rangelov}}, \bibinfo {author} {\bibfnamefont {B.~W.}\ \bibnamefont {Shore}},
  \ and\ \bibinfo {author} {\bibfnamefont {K.}~\bibnamefont {Bergmann}},\
  }\href {\doibase 10.1103/RevModPhys.89.015006} {\bibfield  {journal}
  {\bibinfo  {journal} {Rev. Mod. Phys.}\ }\textbf {\bibinfo {volume} {89}},\
  \bibinfo {pages} {015006} (\bibinfo {year} {2017})}\BibitemShut {NoStop}%
\bibitem [{\citenamefont {Bergmann}\ \emph {et~al.}(1998)\citenamefont
  {Bergmann}, \citenamefont {Theuer},\ and\ \citenamefont
  {Shore}}]{RevModPhys.70.1003}%
  \BibitemOpen
  \bibfield  {author} {\bibinfo {author} {\bibfnamefont {K.}~\bibnamefont
  {Bergmann}}, \bibinfo {author} {\bibfnamefont {H.}~\bibnamefont {Theuer}}, \
  and\ \bibinfo {author} {\bibfnamefont {B.~W.}\ \bibnamefont {Shore}},\ }\href
  {\doibase 10.1103/RevModPhys.70.1003} {\bibfield  {journal} {\bibinfo
  {journal} {Rev. Mod. Phys.}\ }\textbf {\bibinfo {volume} {70}},\ \bibinfo
  {pages} {1003} (\bibinfo {year} {1998})}\BibitemShut {NoStop}%
\bibitem [{\citenamefont {Bergmann}\ \emph {et~al.}(2019)\citenamefont
  {Bergmann}, \citenamefont {N{\"a}gerl}, \citenamefont {Panda}, \citenamefont
  {Gabrielse}, \citenamefont {Miloglyadov}, \citenamefont {Quack},
  \citenamefont {Seyfang}, \citenamefont {Wichmann}, \citenamefont {Ospelkaus},
  \citenamefont {Kuhn}, \citenamefont {Longhi}, \citenamefont {Szameit},
  \citenamefont {Pirro}, \citenamefont {Hillebrands}, \citenamefont {Zhu},
  \citenamefont {Zhu}, \citenamefont {Drewsen}, \citenamefont {Hensinger},
  \citenamefont {Weidt}, \citenamefont {Halfmann}, \citenamefont {Wang},
  \citenamefont {Paraoanu}, \citenamefont {Vitanov}, \citenamefont {Mompart},
  \citenamefont {Busch}, \citenamefont {Barnum}, \citenamefont {Grimes},
  \citenamefont {Field}, \citenamefont {Raizen}, \citenamefont {Narevicius},
  \citenamefont {Auzinsh}, \citenamefont {Budker}, \citenamefont
  {P{\'{a}}lffy},\ and\ \citenamefont {Keitel}}]{Bergmann_2019}%
  \BibitemOpen
  \bibfield  {author} {\bibinfo {author} {\bibfnamefont {K.}~\bibnamefont
  {Bergmann}}, \bibinfo {author} {\bibfnamefont {H.-C.}\ \bibnamefont
  {N{\"a}gerl}}, \bibinfo {author} {\bibfnamefont {C.}~\bibnamefont {Panda}},
  \bibinfo {author} {\bibfnamefont {G.}~\bibnamefont {Gabrielse}}, \bibinfo
  {author} {\bibfnamefont {E.}~\bibnamefont {Miloglyadov}}, \bibinfo {author}
  {\bibfnamefont {M.}~\bibnamefont {Quack}}, \bibinfo {author} {\bibfnamefont
  {G.}~\bibnamefont {Seyfang}}, \bibinfo {author} {\bibfnamefont
  {G.}~\bibnamefont {Wichmann}}, \bibinfo {author} {\bibfnamefont
  {S.}~\bibnamefont {Ospelkaus}}, \bibinfo {author} {\bibfnamefont
  {A.}~\bibnamefont {Kuhn}}, \bibinfo {author} {\bibfnamefont {S.}~\bibnamefont
  {Longhi}}, \bibinfo {author} {\bibfnamefont {A.}~\bibnamefont {Szameit}},
  \bibinfo {author} {\bibfnamefont {P.}~\bibnamefont {Pirro}}, \bibinfo
  {author} {\bibfnamefont {B.}~\bibnamefont {Hillebrands}}, \bibinfo {author}
  {\bibfnamefont {X.-F.}\ \bibnamefont {Zhu}}, \bibinfo {author} {\bibfnamefont
  {J.}~\bibnamefont {Zhu}}, \bibinfo {author} {\bibfnamefont {M.}~\bibnamefont
  {Drewsen}}, \bibinfo {author} {\bibfnamefont {W.~K.}\ \bibnamefont
  {Hensinger}}, \bibinfo {author} {\bibfnamefont {S.}~\bibnamefont {Weidt}},
  \bibinfo {author} {\bibfnamefont {T.}~\bibnamefont {Halfmann}}, \bibinfo
  {author} {\bibfnamefont {H.-L.}\ \bibnamefont {Wang}}, \bibinfo {author}
  {\bibfnamefont {G.~S.}\ \bibnamefont {Paraoanu}}, \bibinfo {author}
  {\bibfnamefont {N.~V.}\ \bibnamefont {Vitanov}}, \bibinfo {author}
  {\bibfnamefont {J.}~\bibnamefont {Mompart}}, \bibinfo {author} {\bibfnamefont
  {T.}~\bibnamefont {Busch}}, \bibinfo {author} {\bibfnamefont {T.~J.}\
  \bibnamefont {Barnum}}, \bibinfo {author} {\bibfnamefont {D.~D.}\
  \bibnamefont {Grimes}}, \bibinfo {author} {\bibfnamefont {R.~W.}\
  \bibnamefont {Field}}, \bibinfo {author} {\bibfnamefont {M.~G.}\ \bibnamefont
  {Raizen}}, \bibinfo {author} {\bibfnamefont {E.}~\bibnamefont {Narevicius}},
  \bibinfo {author} {\bibfnamefont {M.}~\bibnamefont {Auzinsh}}, \bibinfo
  {author} {\bibfnamefont {D.}~\bibnamefont {Budker}}, \bibinfo {author}
  {\bibfnamefont {A.}~\bibnamefont {P{\'{a}}lffy}}, \ and\ \bibinfo {author}
  {\bibfnamefont {C.~H.}\ \bibnamefont {Keitel}},\ }\href@noop {} {\bibfield
  {journal} {\bibinfo  {journal} {Journal of Physics B: Atomic, Molecular and
  Optical Physics}\ }\textbf {\bibinfo {volume} {52}},\ \bibinfo {pages}
  {202001} (\bibinfo {year} {2019})}\BibitemShut {NoStop}%
\bibitem [{\citenamefont {Klein}\ \emph {et~al.}(2007)\citenamefont {Klein},
  \citenamefont {Beil},\ and\ \citenamefont
  {Halfmann}}]{PhysRevLett.99.113003}%
  \BibitemOpen
  \bibfield  {author} {\bibinfo {author} {\bibfnamefont {J.}~\bibnamefont
  {Klein}}, \bibinfo {author} {\bibfnamefont {F.}~\bibnamefont {Beil}}, \ and\
  \bibinfo {author} {\bibfnamefont {T.}~\bibnamefont {Halfmann}},\ }\href
  {\doibase 10.1103/PhysRevLett.99.113003} {\bibfield  {journal} {\bibinfo
  {journal} {Phys. Rev. Lett.}\ }\textbf {\bibinfo {volume} {99}},\ \bibinfo
  {pages} {113003} (\bibinfo {year} {2007})}\BibitemShut {NoStop}%
\bibitem [{\citenamefont {Liu}\ \emph {et~al.}(2019)\citenamefont {Liu},
  \citenamefont {Zhang}, \citenamefont {Yang}, \citenamefont {Liu},
  \citenamefont {Nan}, \citenamefont {Rui}, \citenamefont {Zhao},\ and\
  \citenamefont {Pan}}]{PhysRevLett.122.253201}%
  \BibitemOpen
  \bibfield  {author} {\bibinfo {author} {\bibfnamefont {L.}~\bibnamefont
  {Liu}}, \bibinfo {author} {\bibfnamefont {D.-C.}\ \bibnamefont {Zhang}},
  \bibinfo {author} {\bibfnamefont {H.}~\bibnamefont {Yang}}, \bibinfo {author}
  {\bibfnamefont {Y.-X.}\ \bibnamefont {Liu}}, \bibinfo {author} {\bibfnamefont
  {J.}~\bibnamefont {Nan}}, \bibinfo {author} {\bibfnamefont {J.}~\bibnamefont
  {Rui}}, \bibinfo {author} {\bibfnamefont {B.}~\bibnamefont {Zhao}}, \ and\
  \bibinfo {author} {\bibfnamefont {J.-W.}\ \bibnamefont {Pan}},\ }\href
  {\doibase 10.1103/PhysRevLett.122.253201} {\bibfield  {journal} {\bibinfo
  {journal} {Phys. Rev. Lett.}\ }\textbf {\bibinfo {volume} {122}},\ \bibinfo
  {pages} {253201} (\bibinfo {year} {2019})}\BibitemShut {NoStop}%
\bibitem [{\citenamefont {Park}\ \emph {et~al.}(2015)\citenamefont {Park},
  \citenamefont {Will},\ and\ \citenamefont
  {Zwierlein}}]{PhysRevLett.114.205302}%
  \BibitemOpen
  \bibfield  {author} {\bibinfo {author} {\bibfnamefont {J.~W.}\ \bibnamefont
  {Park}}, \bibinfo {author} {\bibfnamefont {S.~A.}\ \bibnamefont {Will}}, \
  and\ \bibinfo {author} {\bibfnamefont {M.~W.}\ \bibnamefont {Zwierlein}},\
  }\href {\doibase 10.1103/PhysRevLett.114.205302} {\bibfield  {journal}
  {\bibinfo  {journal} {Phys. Rev. Lett.}\ }\textbf {\bibinfo {volume} {114}},\
  \bibinfo {pages} {205302} (\bibinfo {year} {2015})}\BibitemShut {NoStop}%
\bibitem [{\citenamefont {Guo}\ \emph {et~al.}(2016)\citenamefont {Guo},
  \citenamefont {Zhu}, \citenamefont {Lu}, \citenamefont {Ye}, \citenamefont
  {Wang}, \citenamefont {Vexiau}, \citenamefont {Bouloufa-Maafa}, \citenamefont
  {Qu\'em\'ener}, \citenamefont {Dulieu},\ and\ \citenamefont
  {Wang}}]{PhysRevLett.116.205303}%
  \BibitemOpen
  \bibfield  {author} {\bibinfo {author} {\bibfnamefont {M.}~\bibnamefont
  {Guo}}, \bibinfo {author} {\bibfnamefont {B.}~\bibnamefont {Zhu}}, \bibinfo
  {author} {\bibfnamefont {B.}~\bibnamefont {Lu}}, \bibinfo {author}
  {\bibfnamefont {X.}~\bibnamefont {Ye}}, \bibinfo {author} {\bibfnamefont
  {F.}~\bibnamefont {Wang}}, \bibinfo {author} {\bibfnamefont {R.}~\bibnamefont
  {Vexiau}}, \bibinfo {author} {\bibfnamefont {N.}~\bibnamefont
  {Bouloufa-Maafa}}, \bibinfo {author} {\bibfnamefont {G.}~\bibnamefont
  {Qu\'em\'ener}}, \bibinfo {author} {\bibfnamefont {O.}~\bibnamefont
  {Dulieu}}, \ and\ \bibinfo {author} {\bibfnamefont {D.}~\bibnamefont
  {Wang}},\ }\href {\doibase 10.1103/PhysRevLett.116.205303} {\bibfield
  {journal} {\bibinfo  {journal} {Phys. Rev. Lett.}\ }\textbf {\bibinfo
  {volume} {116}},\ \bibinfo {pages} {205303} (\bibinfo {year}
  {2016})}\BibitemShut {NoStop}%
\bibitem [{\citenamefont {Coulston}\ and\ \citenamefont
  {Bergmann}(1992)}]{coulston1992population}%
  \BibitemOpen
  \bibfield  {author} {\bibinfo {author} {\bibfnamefont {G.~W.}\ \bibnamefont
  {Coulston}}\ and\ \bibinfo {author} {\bibfnamefont {K.}~\bibnamefont
  {Bergmann}},\ }\href@noop {} {\bibfield  {journal} {\bibinfo  {journal} {The
  Journal of chemical physics}\ }\textbf {\bibinfo {volume} {96}},\ \bibinfo
  {pages} {3467} (\bibinfo {year} {1992})}\BibitemShut {NoStop}%
\bibitem [{\citenamefont {Gaubatz}\ \emph {et~al.}(1990)\citenamefont
  {Gaubatz}, \citenamefont {Rudecki}, \citenamefont {Schiemann},\ and\
  \citenamefont {Bergmann}}]{doi:10.1063/1.458514}%
  \BibitemOpen
  \bibfield  {author} {\bibinfo {author} {\bibfnamefont {U.}~\bibnamefont
  {Gaubatz}}, \bibinfo {author} {\bibfnamefont {P.}~\bibnamefont {Rudecki}},
  \bibinfo {author} {\bibfnamefont {S.}~\bibnamefont {Schiemann}}, \ and\
  \bibinfo {author} {\bibfnamefont {K.}~\bibnamefont {Bergmann}},\ }\href@noop
  {} {\bibfield  {journal} {\bibinfo  {journal} {The Journal of Chemical
  Physics}\ }\textbf {\bibinfo {volume} {92}},\ \bibinfo {pages} {5363}
  (\bibinfo {year} {1990})}\BibitemShut {NoStop}%
\bibitem [{\citenamefont {Vitanov}(1998)}]{PhysRevA.58.2295}%
  \BibitemOpen
  \bibfield  {author} {\bibinfo {author} {\bibfnamefont {N.~V.}\ \bibnamefont
  {Vitanov}},\ }\href {\doibase 10.1103/PhysRevA.58.2295} {\bibfield  {journal}
  {\bibinfo  {journal} {Phys. Rev. A}\ }\textbf {\bibinfo {volume} {58}},\
  \bibinfo {pages} {2295} (\bibinfo {year} {1998})}\BibitemShut {NoStop}%
\bibitem [{\citenamefont {Vitanov}(2020)}]{PhysRevA.102.023515}%
  \BibitemOpen
  \bibfield  {author} {\bibinfo {author} {\bibfnamefont {N.~V.}\ \bibnamefont
  {Vitanov}},\ }\href {\doibase 10.1103/PhysRevA.102.023515} {\bibfield
  {journal} {\bibinfo  {journal} {Phys. Rev. A}\ }\textbf {\bibinfo {volume}
  {102}},\ \bibinfo {pages} {023515} (\bibinfo {year} {2020})}\BibitemShut
  {NoStop}%
\bibitem [{\citenamefont {Vitanov}\ \emph {et~al.}(2001)\citenamefont
  {Vitanov}, \citenamefont {Halfmann}, \citenamefont {Shore},\ and\
  \citenamefont {Bergmann}}]{vitanov2001laser}%
  \BibitemOpen
  \bibfield  {author} {\bibinfo {author} {\bibfnamefont {N.~V.}\ \bibnamefont
  {Vitanov}}, \bibinfo {author} {\bibfnamefont {T.}~\bibnamefont {Halfmann}},
  \bibinfo {author} {\bibfnamefont {B.~W.}\ \bibnamefont {Shore}}, \ and\
  \bibinfo {author} {\bibfnamefont {K.}~\bibnamefont {Bergmann}},\ }\href@noop
  {} {\bibfield  {journal} {\bibinfo  {journal} {Annual review of physical
  chemistry}\ }\textbf {\bibinfo {volume} {52}},\ \bibinfo {pages} {763}
  (\bibinfo {year} {2001})}\BibitemShut {NoStop}%
\bibitem [{\citenamefont {Band}\ and\ \citenamefont
  {Julienne}(1991)}]{doi:10.1063/1.461642}%
  \BibitemOpen
  \bibfield  {author} {\bibinfo {author} {\bibfnamefont {Y.~B.}\ \bibnamefont
  {Band}}\ and\ \bibinfo {author} {\bibfnamefont {P.~S.}\ \bibnamefont
  {Julienne}},\ }\href@noop {} {\bibfield  {journal} {\bibinfo  {journal} {The
  Journal of Chemical Physics}\ }\textbf {\bibinfo {volume} {95}},\ \bibinfo
  {pages} {5681} (\bibinfo {year} {1991})}\BibitemShut {NoStop}%
\bibitem [{\citenamefont {Marte}\ \emph {et~al.}(1991)\citenamefont {Marte},
  \citenamefont {Zoller},\ and\ \citenamefont {Hall}}]{PhysRevA.44.R4118}%
  \BibitemOpen
  \bibfield  {author} {\bibinfo {author} {\bibfnamefont {P.}~\bibnamefont
  {Marte}}, \bibinfo {author} {\bibfnamefont {P.}~\bibnamefont {Zoller}}, \
  and\ \bibinfo {author} {\bibfnamefont {J.~L.}\ \bibnamefont {Hall}},\ }\href
  {\doibase 10.1103/PhysRevA.44.R4118} {\bibfield  {journal} {\bibinfo
  {journal} {Phys. Rev. A}\ }\textbf {\bibinfo {volume} {44}},\ \bibinfo
  {pages} {R4118} (\bibinfo {year} {1991})}\BibitemShut {NoStop}%
\bibitem [{\citenamefont {Smith}(1992)}]{Smith:92}%
  \BibitemOpen
  \bibfield  {author} {\bibinfo {author} {\bibfnamefont {A.~V.}\ \bibnamefont
  {Smith}},\ }\href@noop {} {\bibfield  {journal} {\bibinfo  {journal} {J. Opt.
  Soc. Am. B}\ }\textbf {\bibinfo {volume} {9}},\ \bibinfo {pages} {1543}
  (\bibinfo {year} {1992})}\BibitemShut {NoStop}%
\bibitem [{\citenamefont {Shore}(2017)}]{Shore:17}%
  \BibitemOpen
  \bibfield  {author} {\bibinfo {author} {\bibfnamefont {B.~W.}\ \bibnamefont
  {Shore}},\ }\href@noop {} {\bibfield  {journal} {\bibinfo  {journal} {Adv.
  Opt. Photon.}\ }\textbf {\bibinfo {volume} {9}},\ \bibinfo {pages} {563}
  (\bibinfo {year} {2017})}\BibitemShut {NoStop}%
\bibitem [{\citenamefont {Kis}\ and\ \citenamefont
  {Renzoni}(2002)}]{PhysRevA.65.032318}%
  \BibitemOpen
  \bibfield  {author} {\bibinfo {author} {\bibfnamefont {Z.}~\bibnamefont
  {Kis}}\ and\ \bibinfo {author} {\bibfnamefont {F.}~\bibnamefont {Renzoni}},\
  }\href {\doibase 10.1103/PhysRevA.65.032318} {\bibfield  {journal} {\bibinfo
  {journal} {Phys. Rev. A}\ }\textbf {\bibinfo {volume} {65}},\ \bibinfo
  {pages} {032318} (\bibinfo {year} {2002})}\BibitemShut {NoStop}%
\bibitem [{\citenamefont {Unanyan}\ and\ \citenamefont
  {Fleischhauer}(2002)}]{PhysRevA.66.032109}%
  \BibitemOpen
  \bibfield  {author} {\bibinfo {author} {\bibfnamefont {R.~G.}\ \bibnamefont
  {Unanyan}}\ and\ \bibinfo {author} {\bibfnamefont {M.}~\bibnamefont
  {Fleischhauer}},\ }\href {\doibase 10.1103/PhysRevA.66.032109} {\bibfield
  {journal} {\bibinfo  {journal} {Phys. Rev. A}\ }\textbf {\bibinfo {volume}
  {66}},\ \bibinfo {pages} {032109} (\bibinfo {year} {2002})}\BibitemShut
  {NoStop}%
\bibitem [{\citenamefont {Ortiz}\ \emph {et~al.}(2018)\citenamefont {Ortiz},
  \citenamefont {Song}, \citenamefont {Wu}, \citenamefont {Ivannikov},\ and\
  \citenamefont {Byrnes}}]{PhysRevA.98.043616}%
  \BibitemOpen
  \bibfield  {author} {\bibinfo {author} {\bibfnamefont {S.}~\bibnamefont
  {Ortiz}}, \bibinfo {author} {\bibfnamefont {Y.}~\bibnamefont {Song}},
  \bibinfo {author} {\bibfnamefont {J.}~\bibnamefont {Wu}}, \bibinfo {author}
  {\bibfnamefont {V.}~\bibnamefont {Ivannikov}}, \ and\ \bibinfo {author}
  {\bibfnamefont {T.}~\bibnamefont {Byrnes}},\ }\href {\doibase
  10.1103/PhysRevA.98.043616} {\bibfield  {journal} {\bibinfo  {journal} {Phys.
  Rev. A}\ }\textbf {\bibinfo {volume} {98}},\ \bibinfo {pages} {043616}
  (\bibinfo {year} {2018})}\BibitemShut {NoStop}%
\bibitem [{\citenamefont {Jing}\ \emph {et~al.}(2019)\citenamefont {Jing},
  \citenamefont {Fadel}, \citenamefont {Ivannikov},\ and\ \citenamefont
  {Byrnes}}]{jing2019split}%
  \BibitemOpen
  \bibfield  {author} {\bibinfo {author} {\bibfnamefont {Y.}~\bibnamefont
  {Jing}}, \bibinfo {author} {\bibfnamefont {M.}~\bibnamefont {Fadel}},
  \bibinfo {author} {\bibfnamefont {V.}~\bibnamefont {Ivannikov}}, \ and\
  \bibinfo {author} {\bibfnamefont {T.}~\bibnamefont {Byrnes}},\ }\href@noop {}
  {\bibfield  {journal} {\bibinfo  {journal} {New Journal of Physics}\ }\textbf
  {\bibinfo {volume} {21}},\ \bibinfo {pages} {093038} (\bibinfo {year}
  {2019})}\BibitemShut {NoStop}%
\bibitem [{\citenamefont {Daems}\ and\ \citenamefont
  {Gu\'erin}(2007)}]{PhysRevLett.99.170503}%
  \BibitemOpen
  \bibfield  {author} {\bibinfo {author} {\bibfnamefont {D.}~\bibnamefont
  {Daems}}\ and\ \bibinfo {author} {\bibfnamefont {S.}~\bibnamefont
  {Gu\'erin}},\ }\href {\doibase 10.1103/PhysRevLett.99.170503} {\bibfield
  {journal} {\bibinfo  {journal} {Phys. Rev. Lett.}\ }\textbf {\bibinfo
  {volume} {99}},\ \bibinfo {pages} {170503} (\bibinfo {year}
  {2007})}\BibitemShut {NoStop}%
\bibitem [{\citenamefont {Daems}\ and\ \citenamefont
  {Gu\'erin}(2008)}]{PhysRevA.78.022330}%
  \BibitemOpen
  \bibfield  {author} {\bibinfo {author} {\bibfnamefont {D.}~\bibnamefont
  {Daems}}\ and\ \bibinfo {author} {\bibfnamefont {S.}~\bibnamefont
  {Gu\'erin}},\ }\href {\doibase 10.1103/PhysRevA.78.022330} {\bibfield
  {journal} {\bibinfo  {journal} {Phys. Rev. A}\ }\textbf {\bibinfo {volume}
  {78}},\ \bibinfo {pages} {022330} (\bibinfo {year} {2008})}\BibitemShut
  {NoStop}%
\bibitem [{\citenamefont {Morris}\ and\ \citenamefont
  {Shore}(1983)}]{PhysRevA.27.906}%
  \BibitemOpen
  \bibfield  {author} {\bibinfo {author} {\bibfnamefont {J.~R.}\ \bibnamefont
  {Morris}}\ and\ \bibinfo {author} {\bibfnamefont {B.~W.}\ \bibnamefont
  {Shore}},\ }\href {\doibase 10.1103/PhysRevA.27.906} {\bibfield  {journal}
  {\bibinfo  {journal} {Phys. Rev. A}\ }\textbf {\bibinfo {volume} {27}},\
  \bibinfo {pages} {906} (\bibinfo {year} {1983})}\BibitemShut {NoStop}%
\bibitem [{\citenamefont {Shore}(2014)}]{doi:10.1080/09500340.2013.837205}%
  \BibitemOpen
  \bibfield  {author} {\bibinfo {author} {\bibfnamefont {B.~W.}\ \bibnamefont
  {Shore}},\ }\href@noop {} {\bibfield  {journal} {\bibinfo  {journal} {Journal
  of Modern Optics}\ }\textbf {\bibinfo {volume} {61}},\ \bibinfo {pages} {787}
  (\bibinfo {year} {2014})}\BibitemShut {NoStop}%
\bibitem [{\citenamefont {Finkelstein-Shapiro}\ \emph
  {et~al.}(2019)\citenamefont {Finkelstein-Shapiro}, \citenamefont {Felicetti},
  \citenamefont {Hansen}, \citenamefont {Pullerits},\ and\ \citenamefont
  {Keller}}]{PhysRevA.99.053829}%
  \BibitemOpen
  \bibfield  {author} {\bibinfo {author} {\bibfnamefont {D.}~\bibnamefont
  {Finkelstein-Shapiro}}, \bibinfo {author} {\bibfnamefont {S.}~\bibnamefont
  {Felicetti}}, \bibinfo {author} {\bibfnamefont {T.}~\bibnamefont {Hansen}},
  \bibinfo {author} {\bibfnamefont {T.~o.}\ \bibnamefont {Pullerits}}, \ and\
  \bibinfo {author} {\bibfnamefont {A.}~\bibnamefont {Keller}},\ }\href
  {\doibase 10.1103/PhysRevA.99.053829} {\bibfield  {journal} {\bibinfo
  {journal} {Phys. Rev. A}\ }\textbf {\bibinfo {volume} {99}},\ \bibinfo
  {pages} {053829} (\bibinfo {year} {2019})}\BibitemShut {NoStop}%
\bibitem [{\citenamefont {Bernstein}(2009)}]{bernstein2009matrix}%
  \BibitemOpen
  \bibfield  {author} {\bibinfo {author} {\bibfnamefont {D.~S.}\ \bibnamefont
  {Bernstein}},\ }in\ \href@noop {} {\emph {\bibinfo {booktitle} {Matrix
  Mathematics}}}\ (\bibinfo  {publisher} {Princeton university press},\
  \bibinfo {year} {2009})\BibitemShut {NoStop}%
\bibitem [{\citenamefont {Juzeli\ifmmode~\bar{u}\else \={u}\fi{}nas}\ \emph
  {et~al.}(2010)\citenamefont {Juzeli\ifmmode~\bar{u}\else \={u}\fi{}nas},
  \citenamefont {Ruseckas},\ and\ \citenamefont
  {Dalibard}}]{PhysRevA.81.053403}%
  \BibitemOpen
  \bibfield  {author} {\bibinfo {author} {\bibfnamefont {G.}~\bibnamefont
  {Juzeli\ifmmode~\bar{u}\else \={u}\fi{}nas}}, \bibinfo {author}
  {\bibfnamefont {J.}~\bibnamefont {Ruseckas}}, \ and\ \bibinfo {author}
  {\bibfnamefont {J.}~\bibnamefont {Dalibard}},\ }\href {\doibase
  10.1103/PhysRevA.81.053403} {\bibfield  {journal} {\bibinfo  {journal} {Phys.
  Rev. A}\ }\textbf {\bibinfo {volume} {81}},\ \bibinfo {pages} {053403}
  (\bibinfo {year} {2010})}\BibitemShut {NoStop}%
\bibitem [{\citenamefont {Abdelrahman}\ \emph {et~al.}(2014)\citenamefont
  {Abdelrahman}, \citenamefont {Mukai}, \citenamefont {H{\"a}ffner},\ and\
  \citenamefont {Byrnes}}]{abdelrahman2014coherent}%
  \BibitemOpen
  \bibfield  {author} {\bibinfo {author} {\bibfnamefont {A.}~\bibnamefont
  {Abdelrahman}}, \bibinfo {author} {\bibfnamefont {T.}~\bibnamefont {Mukai}},
  \bibinfo {author} {\bibfnamefont {H.}~\bibnamefont {H{\"a}ffner}}, \ and\
  \bibinfo {author} {\bibfnamefont {T.}~\bibnamefont {Byrnes}},\ }\href@noop {}
  {\bibfield  {journal} {\bibinfo  {journal} {Optics express}\ }\textbf
  {\bibinfo {volume} {22}},\ \bibinfo {pages} {3501} (\bibinfo {year}
  {2014})}\BibitemShut {NoStop}%
\bibitem [{\citenamefont {Byrnes}\ and\ \citenamefont
  {Ilo-Okeke}(2021)}]{byrnes2021quantum}%
  \BibitemOpen
  \bibfield  {author} {\bibinfo {author} {\bibfnamefont {T.}~\bibnamefont
  {Byrnes}}\ and\ \bibinfo {author} {\bibfnamefont {E.~O.}\ \bibnamefont
  {Ilo-Okeke}},\ }\href@noop {} {\emph {\bibinfo {title} {Quantum atom optics:
  Theory and applications to quantum technology}}}\ (\bibinfo  {publisher}
  {Cambridge university press},\ \bibinfo {year} {2021})\BibitemShut {NoStop}%
\end{thebibliography}%

\end{document}